\newtheorem{proposition}{Proposition}  
\newtheorem{lemma}{Lemma}
\newtheorem{theorem}{Theorem}
\newtheorem{corollary}{Corollary}
\theoremstyle{definition}
\newtheorem{definition}{Definition}
\newtheorem{axiom}{Axiom}
\newtheorem*{axiom*}{Axiom}
\newtheorem{example}{Example}
\renewcommand\thmcontinues[1]{Continued}
\newcommand{\ra}{\rightarrow}
\newcommand{\RA}{\Rightarrow}
\DeclareMathOperator*{\argmax}{arg\,max}
\definecolor{purple}{RGB}{85, 6,139}
\definecolor{teal}{RGB}{2,108,128}
\definecolor{lavender}{RGB}{129, 102, 122}
\definecolor{carolina blue}{RGB}{68, 157, 209}
\definecolor{phthalo blue}{RGB}{2, 8, 135}
\definecolor{purple2}{RGB}{149, 96, 219}
\definecolor{green1}{RGB}{96, 219, 117}
\definecolor{orange1}{RGB}{208,70,0}
\title{\textsc{Reference Dependence and Random Attention}\thanks{This paper combines the major results from \cite{kovach2016thinking} and \cite{suleymanov2018}. We would like to thank Federico Echenique, Shaowei Ke, Jay Lu, Yusufcan Masatlioglu, Romans Pancs, Pietro Ortoleva, Collin Raymond, and Levent \"Ulk\"u for helpful comments and suggestions. We would also like to thank the audiences of the spring 2016 Midwest Economic Theory Conference, the 2016 North American Summer Meeting of the Econometric Society,  RUD 2016, Foundations of Utility and Risk Conference 2018, and the 2018 CIREQ Montr\'eal Symposium in Microeconomic Theory. Lastly, we thank the editor and three anonymous reviewers for their comments, which led to substantial improvements in the paper. All errors are ours.}}  
\author{\href{https://www.matthewkovach.com/}{Matthew Kovach}\footnote{Department of Economics, Virginia Tech.  E-mail: mkovach@vt.edu.} \and \href{https://sites.google.com/view/elchin-suleymanov/}{Elchin Suleymanov}\footnote{Department of Economics, Purdue. E-mail: esuleyma@purdue.edu.}}
\date{\today}
\begin{document}
\maketitle

\vspace{7 mm}

\begin{abstract}We explore the ways that a reference point may direct attention. Utilizing a stochastic choice framework, we provide behavioral foundations for the Reference-Dependent Random Attention Model (RD-RAM). Our characterization result shows that preferences may be uniquely identified even when the attention process depends arbitrarily on both the menu and the reference point. We then analyze specific attention processes, characterizing reference-dependent versions of several prominent models of stochastic consideration. Our analysis illustrates which attention processes can accommodate behavioral patterns commonly observed in studies, such as frequency reversals among non-status quo alternatives and choice overload.\\

\noindent\textbf{Keywords:} Random Choice, Status Quo Bias, Stochastic Consideration

\noindent\textbf{JEL:} D01, D11
\end{abstract}
\vspace{7 mm}



\vspace{30 mm}

\pagebreak

\section{Introduction}

Decision makers (DM) frequently make choices in the presence of a status quo.  Some examples include an employee selecting among investment options for their $401$k plan after the company has designated a default option, an employee contemplating a job change, or a consumer choosing among various options with their typical purchase in mind.  Further, many economics and psychology experiments have illustrated that the existence of a status quo significantly impacts the distribution of choices. Consequently, numerous models of status quo bias and reference-dependent choice have proliferated.\footnote{For example, see \cite{kahneman1979prospect}, \cite{tversky1991loss},  \cite{munro2003theory}, \cite{sugden2003reference}, \cite{apesteguia2009theory}, \cite{masatlioglu2005rational, masatlioglu2014canonical}, \cite{sagi2006anchored}.}

However, some striking behavioral patterns are difficult to accommodate with existing models. First, changing the status quo may induce frequency reversals among non-status quo alternatives \citep{masatlioglu2013reference, dean2017limited}.  Second, a status quo may shift attention towards a particular category of goods, rather than merely draw attention towards itself \citep{Maltz2020category}. Third, a DM may be more likely to choose the status quo in large choice sets \citep{iyengar2000choice, Dean2008}. These examples suggest that the status quo plays a role in directing the DM's attention.

To accommodate these and other behavioral patterns, we introduce an observable status quo to the stochastic choice framework and provide a behavioral analysis of \emph{reference-dependent attention}.\footnote{There are many situations in which the reference might be observable (e.g., current job, investments, among many others) and observability of the reference has been assumed in much of the literature \citep{masatlioglu2005rational, Dean2008, masatlioglu2013reference, masatlioglu2014canonical, dean2017limited, guney2018costly}.
}  The primitive in our analysis is a family of reference-dependent random choice rules $\{p_r\}_{r \in X}$, where each $p_r$ encodes the DM's stochastic choices from various menus when $r$ is the reference alternative.

We begin with a flexible model that allows for a general attention process, which we refer to as the \textbf{Reference-Dependent Random Attention Model} (RD-RAM).  In the RD-RAM, when the DM faces a menu $S$ and has a reference alternative $r\in S$, a random consideration set $D$ is realized according to the reference-dependent attention rule $\mu_r(\cdot,S)$. Then she maximizes a reference-independent preference, $\succeq$, over the realized consideration set. 

In its most general form, the only restriction on $\mu_r$ is that  $\mu_r(D,S)>0$ only if $r \in D \subseteq S$.\footnote{In much of the paper we focus on $\mu_r$ that satisfy full support:   $\mu_r(D,S)>0$ for all admissible $D$. We provide characterizations with and without this property.}  This restriction entails two meaningful channels through which the reference influences behavior. First, the presence of this restriction ensures that $r$ is always considered, and so in the RD-RAM the reference is \emph{attention privileged}. Second, the absence of any other restriction permits rich interactions among the reference, the menu, and the DM's attention, and so in the RD-RAM the reference \emph{directs attention}.

%
%
%

While the RD-RAM is general, our main result establishes that it does have empirical content and that a unique preference is identifiable from choice data. We characterize the RD-RAM with three novel axioms. The first axiom, \nameref{NCC} (NCC), is a weak notion of status quo bias for stochastic data. It ensures that the revealed preference is acyclic. The second axiom, \nameref{SQA}, requires that if $x$ is never chosen when $y$ is the status quo, then $y$ should sometimes be chosen when $x$ is the status quo. The third axiom, \nameref{NRE}, requires that the reference alternative is always chosen with positive probability.  These three axioms link behavior across references, enable the identification of a reference-independent preference, and ensure that the consideration set always includes the reference.

The flexibility of $\mu_r$ enables the RD-RAM to serve as a framework to compare more structured models of the attention process. In particular, the RD-RAM nests reference-dependent versions of three well-known random attention models: Independent Random Attention (IRA) \citep{manzini2014stochastic},  Luce Random Attention (LRA) \citep{brady2016menu}, and Constant Random Attention (CRA) \citep{aguiar2017random}. Consequently, each of these models share the common structure of the RD-RAM. We establish novel characterizations for IRA (\autoref{theorem:IRA}), LRA (\autoref{theorem:logit}), and CRA (\autoref{theorem:CRA}) in this new domain, which we view as our primary contribution.  Our characterization utilizes \autoref{proposition2}, which shows that IRA is the precise intersection of CRA and LRA. 

Starting from the RD-RAM, we characterize reference-dependent IRA with two additional axioms: \nameref{IDA} and \nameref{RIDA}. The first axiom requires that the choice probabilities of dominant alternatives do not change when a dominated alternative is removed from a menu. The second axiom requires that the relative choice probability for two alternatives stays the same when the dominant alternative in a menu is removed. Each of these axioms places a restriction on how the distribution of consideration sets changes across menus.  We then provide characterizations for LRA and CRA, each of which involves relaxing the axioms of IRA.  To characterize LRA, we retain \nameref{RIDA} but replace \nameref{IDA} with a condition ensuring that the odds of choosing the reference decrease as the menu grows. To characterize CRA, we retain \nameref{IDA} but replace \nameref{RIDA} with a condition ensuring that the probability of choosing the reference decreases as the menu grows. One benefit of our framework with observable references is that, unlike the original characterizations of these models, our characterizations do not require one to specify an observable outside option. To clarify, since the reference point is always available, it plays the same role in our analysis as the outside option in previous characterizations.

The RD-RAM and the special cases we consider capture a variety of rich behavioral patterns. Since the reference directs attention, changes in the reference can cause large changes in choice probability by shifting attention. For example, IRA may capture category effects and frequency reversals among non-status quo alternatives, both of which are regularly observed experimentally. Another empirical pattern that is well-documented is choice overload, or the increasing tendency to select the default when more options are available. The LRA allows for choice overload, while IRA and CRA do not. Moving beyond these special cases, the RD-RAM can allow for even richer behavior, such as salience-based attention. These examples are discussed in \autoref{section: discussion}.

A natural question for any model of reference-dependence is whether ``being the reference'' unambiguously improves an alternative's chances of selection.  To answer this, we introduce \nameref{SQM}, a new condition formalizing this idea. This condition states that, in any menu, an alternative is most likely to be chosen when it is the reference alternative. Recall that in the RD-RAM there are two channels of influence for the reference; references are attention privileged and direct attention.  References being attention privileged increases the likelihood of choosing the reference.  References directing attention has an ambiguous effect, since the reference may direct attention towards preferable alternatives.  Therefore, the net effect of the two channels is also ambiguous. We show that even the most restrictive IRA model (and hence CRA and LRA) may violate \nameref{SQM} due to ``shifts'' in attention; if $x$ becomes the status quo, this may shift attention towards preferred alternatives, which increases the chances of abandoning the status quo. 


Intuition suggests that  \nameref{SQM} might be satisfied if we restrict the references's ability to direct attention, which would require eliminating the second channel of influence for the reference. To investigate this, we define a notion of reference-independent attention for the IRA, LRA, and CRA models. In each of these models, the reference retains its privileged status but does not otherwise influence consideration probabilities. We show that the reference-independent CRA always satisfies \nameref{SQM}, hence so does the reference-independent IRA (\autoref{proposition: SQM - CRA}). However, \nameref{SQM} may still fail for the reference-independent LRA.  

The remainder of the paper is structured as follows. In \autoref{section: model} we introduce the model, characterize the RD-RAM, and discuss the relationship between IRA, LRA, and CRA models. The characterizations of the general model, as well as the IRA, LRA, and CRA models, appear in \autoref{section: characterization}. Our discussion on behavioral implications of the model, the analysis of \nameref{SQM}, and the relationships between the RD-RAM and other well-known models are in \autoref{section: discussion}. The proofs of all results are collected in \autoref{section: proofs}.

\section{Model}\label{section: model}

\subsection{Reference-Dependent Random Attention}

This section defines random choice rules for problems with an observable status quo. Let $X$ be a finite set of alternatives and $\mathcal{X}$ be the collection of non-empty subsets of $X$ (menus).  The collection of menus that contain $r$ is denoted by $\mathcal{X}_r$. A choice problem is a pair $(S,r)$ where $S\in \mathcal{X}_r$. Our observable primitive is the family of reference-dependent random choice rules indexed by each potential status quo: $\{p_r\}_{r \in X}$.

\begin{definition} A \textbf{reference-dependent random choice rule} is a map $p_{r}:X \times \mathcal{X}_r \ra [0,1]$ such that for all problems $S\in \mathcal{X}_r$,  $p_r(x,S)>0$ only if $x\in S$ and $\sum_{x \in S}p_r(x,S)=1$.
\end{definition}

We begin our analysis by introducing a general model of ``unstructured'' reference-dependent consideration that we call the \textbf{Reference-Dependent Random Attention Model} (RD-RAM). This establishes a common framework that enables us to characterize various reference-dependent consideration models. Further, this approach allows us to clarify the distinguishing features among these models.

In the RD-RAM, each reference alternative has an associated \textit{random attention rule} $\mu_r: \mathcal{X}\times \mathcal{X}_r\rightarrow [0,1]$ such that $\mu_r(D,S)>0$ only if $r\in D\subseteq S$ and $\sum_{D:\: D\subseteq S}\mu_r(D,S)=1$. That is, $\mu_r(D,S)$ denotes the probability that the DM's consideration set is $D\subseteq S$ when the choice problem is $(S,r)$. We say that a random attention rule has \textit{full support} if $\mu_r(D,S)>0$ whenever $r\in D\subseteq S$ for any $(S,r)$. We will focus on random attention rules that have full support.\footnote{This is a common assumption in the literature, and it is satisfied in all the special cases that we consider in this paper. Note that this does not require that choice frequencies are always positive, as choice depends on the realized consideration set and the DM's preference. However, one can still imagine that individual consideration may be sparse. To address this possibility, we also provide a characterization of RD-RAM that does not assume full support. Please see the discussion after Theorem \ref{theorem:RDRAM} for more details.} In each choice problem $(S,r)$, the DM realizes some consideration set, drawn from $\mu_r(\cdot, S)$, and then maximizes her preference $\succeq$, which we assume to be a linear order.\footnote{A binary relation $\succeq$ on $X$ is a linear order if it is (i) complete: for any $x,y\in X$, either $x\succeq y$ or $y\succeq x$; (ii) antisymmetric: for any $x,y\in X$, $x\succeq y$ and $y\succeq x$ imply $x=y$; and (iii) transitive: for any $x,y,z\in X$, $x\succeq y\succeq z$ implies $x\succeq z$.}

\begin{definition}
\label{definition:rdram}
Reference-dependent random choice $\{p_r\}_{r\in X}$ has a random attention representation (alternatively, $\{p_r\}_{r\in X}$ is an \textit{RD-RAM}) if there exist a linear order $\succeq$ on $X$ and a full support random attention rule $\{\mu_r\}_{r\in X}$ such that for any choice problem $(S,r)$ and for any $x\in S$, 
\begin{equation}
p_r(x,S)=\underset{D:\: r\in D\subseteq S,\: x=\argmax(\succeq, D)}{\sum}{\mu_r(D,S)}.
\end{equation}
\end{definition}

We can study specific features of attention by introducing structure to $\mu_r$. For instance, we can impose general properties such as the monotonicity condition introduced by \cite{cattaneo2020random}, which requires that the probability of a consideration set may only increase as the number of possible consideration sets decreases: for any $r\in D\subseteq S\subseteq T$, $\mu_r(D,S)\geq \mu_r(D,T)$. Alternatively, we can consider more structured attention processes. For example, the RD-RAM nests the models of Independent Random Attention (IRA) \citep{manzini2014stochastic}, Luce Random Attention (LRA)  \citep{brady2016menu}, and Constant Random Attention (CRA) \citep{aguiar2017random}, which we formally define below. Each of these special cases also satisfies the monotonicity condition of  \cite{cattaneo2020random}. We provide novel foundations for each of these models in \autoref{section: characterization}.

\begin{definition}\label{definition:attnmodels}
Let $\{p_r\}_{r\in X}$ have an RD-RAM representation $(\succeq, \{\mu_r\}_{r\in X})$. We say that $\{p_r\}_{r\in X}$ has 

\begin{enumerate}
\item an \textit{Independent Random Attention (IRA)} representation if there exist attention maps $\gamma_r:X \rightarrow [0,1]$ satisfying $\gamma_{r}(x) \in (0,1)$ and $\gamma_{r}(r)=1$, and for any $r\in D\subseteq S$,
	\begin{equation}
	\label{attn:ira}
	\mu_r(D,S)=\prod_{x\in D}\gamma_r(x)\prod_{y\in S\setminus D}(1-\gamma_r(y)).
	\end{equation}

	\item a \textit{Luce Random Attention (LRA)} representation if there exist
	 $\pi_r:\mathcal{X}\rightarrow [0,1]$ such that $\pi_r(D)>0$ whenever $r\in D\subseteq X$, $\sum_{D:\: r \in D\subseteq X}\pi_r(D)=1$, and for any $r\in D\subseteq S$,
	\begin{equation}
	\label{attn:lra}
	\mu_r(D,S)=\frac{\pi_r(D)}{\sum_{D':\: r \in D'\subseteq S}\pi_r(D')}.
	\end{equation}

	\item a \textit{Constant Random Attention (CRA)} representation if there exist $\pi'_r:\mathcal{X}\rightarrow [0,1]$ such that $\pi'_r(D)>0$ whenever $r\in D\subseteq X$, $\sum_{D:\: r \in D\subseteq X}\pi'_r(D)=1$, and for any $r\in D\subseteq S$,
	\begin{equation}
	\label{attn:cra}
	\mu_r(D,S)=\sum_{D': D'\cap S=D}\pi'_r(D').
	\end{equation} 
\end{enumerate}
\end{definition}	

One benefit of our framework is that our characterizations of these models do not require one to specify an observable outside option. This is in contrast to previous characterizations with the exception of \cite{horan2019random}, which provides a characterization for the IRA rule without assuming an observable outside option. To elaborate, since the reference point is always available to the DM, it plays the same role in our analysis as the outside option in previous characterizations.

A common thread running through each of these special cases is that menu-dependent attention probabilities are determined by a menu-independent primitive. One advantage of this approach is that it increases the applicability of the RD-RAM by imposing more structure on observed choices. Additionally, this also allows us to identify the attention primitives from observed choices. This is not possible in general when arbitrary forms of menu-dependence are assumed. Hence, understanding the behavioral implications of these special cases in this new setup is crucial. Our analysis helps us understand which of these special cases can accommodate the types of behaviors commonly observed in studies. For example, we show that while the IRA rule (and hence both LRA and CRA rules) can capture choice frequency reversals, only the LRA rule can accommodate choice overload (see \autoref{behavior}). 
	
We note that with arbitrary menu-dependence, even the most restrictive consideration set models can become too general. To illustrate, consider the IRA rule, which is the most restrictive model among the special cases we consider. It can be shown that a menu-dependent extension of IRA, where $\gamma_r(x)$ depends arbitrarily on $S$, is equivalent to the RD-RAM. The advantage of the RD-RAM is that it incorporates all these special cases and provides a unified framework for our study.

\subsection{Relationship Between Models}	\label{attentionmodels}

Our first main result shows that Independent Random Attention is equivalently characterized as the intersection of Luce Random Attention and Constant Random Attention. It is well-known that both LRA and CRA representations are generalizations of IRA (see \cite{brady2016menu,aguiar2017random}). Accordingly, each of these generalizations features an attention rule that relaxes \autoref{attn:ira}. 
For LRA, the key feature is an Independence of Irrelevant Alternatives (IIA) property for consideration sets. For any $S,T\in \mathcal{X}$, $D,D'\subseteq  S\cap T$ and $r\in D\cap D'$,
\[\frac{\mu_r(D,S)}{\mu_r(D',S)}=\frac{\pi_r(D)}{\pi_r(D')}=\frac{\mu_r(D,T)}{\mu_r(D',T)}.\]
For CRA, the key feature is that each alternative is considered with the same probability for all choice sets in which it is available. For any $S,T\in \mathcal{X}$ and $r,x\in S\cap T$, 
\[\sum_{D: x\in D}\mu_r(D,S)=\sum_{D: x\in D}\pi_r(D)=\sum_{D: x\in D}\mu_r(D,T).\]
Importantly, we show that if a reference-dependent random attention rule $\mu_r$ has both LRA and CRA representations, then it must also have an IRA representation. 

\begin{proposition}
	\label{proposition2}
	A random attention rule $\mu_r$ has an IRA representation if and only if it has LRA and CRA representations. 
\end{proposition}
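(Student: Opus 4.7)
The plan is to prove both implications, with the forward direction being largely verification and the reverse direction requiring structural work.

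For the forward direction, given an IRA representation with attention probabilities $\gamma_r$, I would take both Luce and CRA weights to be the independent product measure: $\pi_r(D) = \pi'_r(D) = \prod_{x \in D}\gamma_r(x)\prod_{y \in X \setminus D}(1-\gamma_r(y))$ for each $D$ containing $r$. Since $\gamma_r(r)=1$, this places zero mass on subsets not containing $r$ and is a probability distribution on those that do. Factoring $\prod_{y \in X \setminus S}(1-\gamma_r(y))$ out of the Luce normalization and marginalizing independently the coordinates outside $S$ in the CRA expression both recover $\mu_r(D,S) = \prod_{x \in D}\gamma_r(x)\prod_{y \in S \setminus D}(1-\gamma_r(y))$.

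For the reverse direction, assume $\mu_r$ admits both representations and extract two menu-independent objects. From LRA, the ratio $\rho_r(D) := \mu_r(D,S)/\mu_r(\{r\},S) = \pi_r(D)/\pi_r(\{r\})$ depends only on $D$; writing $Z_r(S) := \sum_{D\,:\, r \in D \subseteq S}\rho_r(D)$ gives $\mu_r(D,S) = \rho_r(D)/Z_r(S)$ and $\mu_r(\{r\},S) = 1/Z_r(S)$. From CRA, the inclusion probability $\alpha_r(x) := \sum_{D \ni x}\mu_r(D,S)$ depends only on $x$ and $r$; evaluated on the binary menu $\{r,x\}$ it coincides with $\mu_r(\{r,x\},\{r,x\})$, which lies in $(0,1)$ by full support. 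Define $\gamma_r(x) := \alpha_r(x)$ for $x \neq r$ and $\gamma_r(r) := 1$; this is the candidate IRA primitive.

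To verify the candidate works, split $Z_r(S)$ for any fixed $x \in S \setminus \{r\}$ according to whether $x \in D$. The two pieces are $Z_r(S \setminus \{x\})$ and $\sum_{D \ni x, r \in D \subseteq S}\rho_r(D) = \alpha_r(x)\,Z_r(S)$, yielding
\[
Z_r(S) = Z_r(S \setminus \{x\}) + \alpha_r(x)\,Z_r(S),
\]
so that $Z_r(S) = Z_r(S \setminus \{x\})/(1-\alpha_r(x))$. Iterating this peeling recursion produces $Z_r(S) = \prod_{x \in S \setminus \{r\}}1/(1-\alpha_r(x))$. Expanding the product yields $Z_r(S) = \sum_{D\,:\, r\in D\subseteq S}\prod_{x \in D \setminus \{r\}}\alpha_r(x)/(1-\alpha_r(x))$, and comparing term by term with $Z_r(S) = \sum_D \rho_r(D)$ via induction on $|D|$ (equivalently, Möbius inversion on the Boolean lattice $2^{S \setminus \{r\}}$) forces $\rho_r(D) = \prod_{x \in D \setminus \{r\}}\alpha_r(x)/(1-\alpha_r(x))$. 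Substituting back into $\mu_r(D,S) = \rho_r(D)/Z_r(S)$ reproduces exactly the IRA formula with weights $\gamma_r$.

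The main obstacle is conceptual rather than computational: neither axiom alone suffices, since LRA only pins down relative consideration likelihoods (across menus) while CRA only pins down marginals. The key insight is that their combination yields the recursion $(1-\alpha_r(x))Z_r(S) = Z_r(S \setminus \{x\})$, which is precisely the recursion satisfied by the normalizer of a product-form distribution; once this is recognized, the Möbius step extracting multiplicativity of $\rho_r$ is mechanical.
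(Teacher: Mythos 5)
Your proposal is correct, and both directions track the paper's proof closely. The forward direction is identical: you push the product measure forward as both the Luce weight and the CRA weight and verify the two normalizations collapse to the IRA formula. In the reverse direction you choose the same candidate, $\gamma_r(x)=\mu_r(\{r,x\},\{r,x\})$, and the engine of your argument --- the identity $(1-\alpha_r(x))Z_r(S)=Z_r(S\setminus\{x\})$ --- is exactly the paper's observation that the LRA ratio $\kappa(S,x)=\mu_r(D,S)/\mu_r(D,S\setminus x)$ is independent of $D$ and, by summing and invoking CRA's constant marginals, equals $1-\gamma_r(x)$. Where you part ways is in the bookkeeping: the paper feeds $\mu_r(D,S)=\mu_r(D,S\setminus x)\,(1-\gamma_r(x))$ into an induction on $|S|$ and handles $D=S$ by complementation, whereas you solve the recursion for the normalizer in closed form, $Z_r(S)=\prod_{x\in S\setminus\{r\}}(1-\alpha_r(x))^{-1}$, and then use M\"obius inversion on $2^{S\setminus\{r\}}$ to force multiplicativity of the Luce weights $\rho_r$ themselves. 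Your version buys a slightly stronger intermediate conclusion (an explicit product form for the menu-independent weights $\pi_r$, not just for $\mu_r$ menu by menu), at the cost of an extra inversion step; the paper's induction is more elementary but leaves the structure of $\pi_r$ implicit. Both are complete; just make sure to state explicitly that full support gives $\mu_r(\{r\},S)>0$ so that $\rho_r$ is well defined, and that $Z_r(\{r\})=1$ anchors the peeling recursion.
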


Note that \autoref{proposition2} is stated in terms of the attention rule, which is not observed. In general, if the observed choice rule $\{p_r\}_{r\in X}$ has an LRA representation $(\succeq,\{\pi_r\}_{r\in X})$ and a CRA representation $(\succeq',\{\pi'_r\}_{r\in X})$, \autoref{proposition2} on its own does not imply that $\{p_r\}_{r\in X}$ must have an IRA representation. This is because the preferences, $\succeq$ and $\succeq'$, in the representations might be different. However, we later show in \autoref{theorem:RDRAM} that preference is unique in any RD-RAM representation. Hence, the result in \autoref{proposition2} extends to the observed choice rule $\{p_r\}_{r\in X}$. The following corollary, therefore follows directly from \autoref{theorem:RDRAM} and \autoref{proposition2}. 

\begin{corollary}
	\label{corollary1}
	A reference-dependent random choice rule $\{p_r\}_{r\in X}$ has an IRA representation if and only if it has LRA and CRA representations.
\end{corollary}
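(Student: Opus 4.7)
My plan is to derive the corollary using only the two ingredients singled out in the text: uniqueness of preference in any RD-RAM representation (\autoref{theorem:RDRAM}) and the attention-level equivalence between IRA and LRA$\cap$CRA (\autoref{proposition2}). The easy direction is immediate: given an IRA representation $(\succeq,\{\mu_r\}_{r\in X})$ of $\{p_r\}_{r\in X}$, the attention rule $\{\mu_r\}_{r\in X}$ is itself IRA, so by \autoref{proposition2} it simultaneously has LRA and CRA structures, which deliver LRA and CRA representations of $\{p_r\}_{r\in X}$ under the same preference $\succeq$.

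For the converse, I would start from an LRA representation $(\succeq_L,\{\mu_r^L\}_{r\in X})$ and a CRA representation $(\succeq_C,\{\mu_r^C\}_{r\in X})$ of $\{p_r\}_{r\in X}$. Since each is in particular an RD-RAM representation, \autoref{theorem:RDRAM} forces $\succeq_L=\succeq_C=:\succeq$. To apply \autoref{proposition2} I need one attention rule that is simultaneously LRA and CRA; my plan is to show $\mu_r^L=\mu_r^C$ by arguing each is uniquely pinned down by $\{p_r\}_{r\in X}$ and $\succeq$. I would proceed by induction on menu size. On a binary menu $\{r,x\}$ the observed $p_r(\cdot,\{r,x\})$ directly determines the ratio $\pi_r(\{r,x\})/\pi_r(\{r\})$ in the LRA representation and the marginal consideration probability $\sum_{D:\,x,r\in D}\pi'_r(D)$ in the CRA representation. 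On a larger menu $S\ni r$, using the choice probability at the $\succeq$-maximal element of $S$ and the weights on smaller menus already recovered, one peels off the new Luce weights $\pi_r(D)$ for $D\subseteq S$ (and, in parallel, the new CRA sums) one cardinality level at a time. This inductive recovery determines both attention rules, yielding $\mu_r^L=\mu_r^C$; \autoref{proposition2} then extracts IRA parameters $\gamma_r$ generating $\{p_r\}_{r\in X}$ under $\succeq$.

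The main obstacle is the identification step itself. In the general RD-RAM the attention rule is emphatically not identified by choice data, so I have to lean on the extra structure of LRA and of CRA for the induction to close; I would also need to verify that the argument goes through with the variable reference $r$ playing the role that the fixed outside option plays in the existing characterizations of LRA and CRA. If that identification turns out to be fragile, a cleaner fall-back that avoids uniqueness of attention rules altogether is a direct construction mirroring the proof of \autoref{proposition2}: from the CRA representation set $\gamma_r(r):=1$ and $\gamma_r(x):=\sum_{D:\,x,r\in D}\pi'_r(D)$ for $x\ne r$, assemble the candidate IRA attention rule via \eqref{attn:ira}, and then use the LRA representation as a certificate of the independence needed to verify that this constructed IRA rule reproduces $\{p_r\}_{r\in X}$ under the common $\succeq$.
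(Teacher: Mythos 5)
Your overall route matches the paper's: the forward direction is immediate from \autoref{proposition2}, and the converse begins, as the paper does, by invoking the uniqueness of preference in \autoref{theorem:RDRAM} to force the LRA and CRA representations to share a common $\succeq$. The gap is in the bridge you build next. You propose to show $\mu^L_r=\mu^C_r$ on the grounds that each attention rule is ``uniquely pinned down by $\{p_r\}_{r\in X}$ and $\succeq$,'' but this is false: attention rules in the LRA and CRA models are only partially identified from choice data. The simplest counterexample is a reference $r$ that is $\succeq$-maximal in $X$, for which $p_r(r,S)=1$ for every $S$ and literally any weights $\pi_r$ (resp.\ $\pi'_r$) are consistent with the data. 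More generally, even for an interior $r$ the choice probabilities only determine certain aggregates of the weights --- for instance, sums of $\pi_r(D)$ over all $D$ lying between $(S\cap P_r)\cup r$ and $S$ --- never the individual $\pi_r(D)$, so your induction cannot ``peel off'' weights one cardinality level at a time. Consequently there is no reason the particular attention rules $\mu^L_r$ and $\mu^C_r$ handed to you should coincide, and \autoref{proposition2} cannot be applied to either of them alone.

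Your fallback is the more promising direction, but as written it defers the entire difficulty to the unproved claim that the LRA representation ``certifies the independence'' needed for the constructed $\gamma_r$ to reproduce $\{p_r\}_{r\in X}$: CRA gives you the marginal consideration probabilities, but factorizing the joint event that $x$ is considered while nothing $\succ x$ in $S$ is considered requires exactly the independence you are trying to establish, and you do not say how the LRA structure of a possibly \emph{different} attention rule delivers it. The way the argument actually closes at the level of choice rules is through the characterization theorems: any LRA choice rule satisfies \nameref{RIDA}, any CRA choice rule satisfies \nameref{IDA}, and \autoref{theorem:IRA} shows that an RD-RAM satisfying both has an IRA representation. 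If you want a complete proof that does not presuppose a single attention rule carrying both structures, that is the argument to make explicit.
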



To summarize, we have shown that a reference-dependent random choice (attention) rule has an IRA representation of and only if it has LRA and CRA representations. \autoref{fig:models} illustrates the result in this section. 

\begin{figure}[h!]
	\centering
%
\includegraphics[width=8cm]{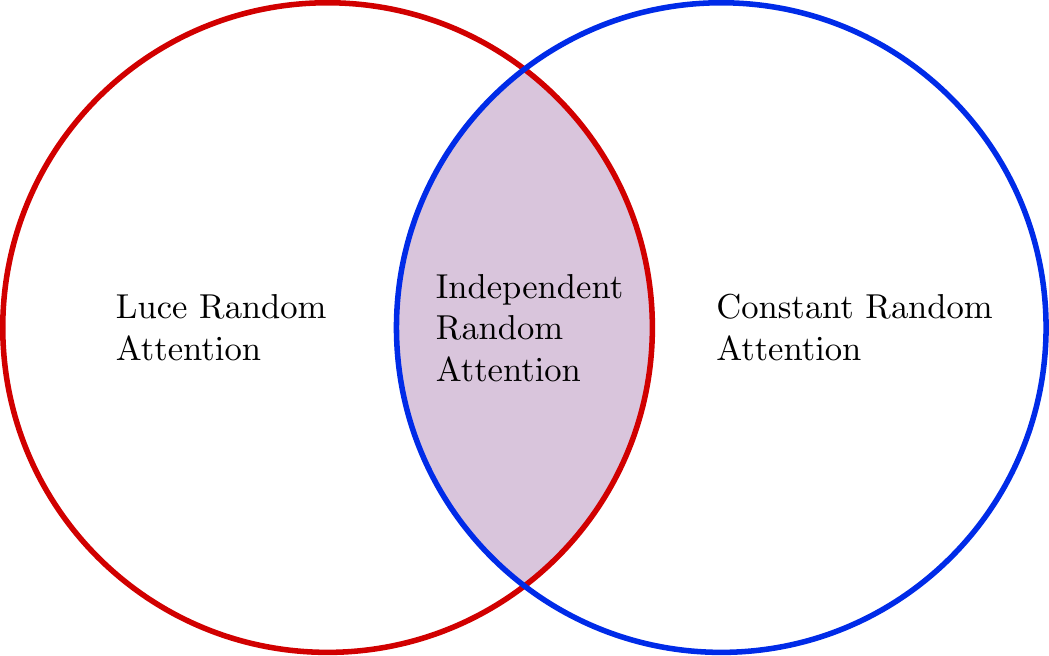}
	\caption{Relationship between models.}
	\label{fig:models}
\end{figure}

\section{Characterization}\label{section: characterization}

Our characterization of RD-RAM rests upon a novel behavioral postulate linking choices across different references. For an intuition behind this postulate, imagine a job seeker currently employed in New York. Further, suppose that in spite of any status quo bias she may feel, she would be willing to relocate to California.  In the complementary situation in which she were already employed in California, it is natural to suppose that she would refuse to accept a job in New York. This imposes a natural asymmetry in a DM's willingness to abandon a status quo. In addition, suppose the DM is also willing to relocate from Michigan to New York. Given that she is willing to relocate from New York to California, it is natural to assume that she would not be willing to relocate from California to Michigan. Our key axiom, \nameref{NCC}, captures this intuition and ensures that the decision maker's choices do not exhibit ``frequency cycles" as in this example.

\begin{axiom}[No-Cycle Condition]\label{NCC}
	For any collection $(x_i,S_i)_{i=1}^n$, if $p_{x_i}(x_{i+1},S_i)>0$ for $i<n$, then $p_{x_n}(x_1,S_n)=0$. 
\end{axiom}


For additional insight into our axiom, it is instructive to compare it with the axiom of status quo bias from \citet{masatlioglu2005rational}.  For a choice correspondence $C$, they impose the following condition: for any $(S,y)$, if $x \in C(S,y)$ then $\{x\} = C(S,x)$. The intuition behind this is that when an alternative $x$ is chosen over the status quo $y$ in some choice problem, then when $x$ becomes the status quo and the set of alternatives is unchanged, it must be the unique choice due to status quo bias.

The direct analogue of this condition would require that whenever $x$ is chosen with positive probability in problem $(S,y)$, then it must be chosen from $(S,x)$ with probability one: $p_y(x,S)>0 \implies p_x(x,S)=1$. On the other hand, \nameref{NCC} only requires that $y$ is never chosen when $x$ is the staus quo, i.e., $p_x(y,S)=0$. This is because under full attention, if $x \in C(S,y)$ we may infer that $x$ (i) dominates the status quo and (ii) is no worse than any other alternative in $S$ that also dominates the status quo.  Under random attention we cannot reason this way, because a DM may choose $x$ only because she (stochastically) failed to consider a third alternative $z$ that dominates $x$. In this case, it is plausible that she may choose $z$ with positive probability even if $x$ were the status quo.  

Our next axiom, \nameref{SQA}, requires that if $x$ is never chosen when $y$ is the status quo, then $y$ must always be chosen with positive probability when $x$ is the status quo. An implication of this property is that reference effects are not too strong to prevent better alternatives from being chosen. To get an intuition behind this axiom, suppose the job seeker is not willing to relocate to New York when she is currently employed in California. \nameref{SQA} then requires that she should be willing to relocate from New York to California with positive probability.

\begin{axiom}[Status Quo Asymmetry]\label{SQA}  For all $x,y \in S\cap T$, $x\neq y$,
	\[p_y(x,S)=0 \quad \RA\quad p_x(y,T)>0 \]
\end{axiom}




Our last axiom states that the reference alternative is always chosen with positive probability. Hence, the job seeker in our example should stay in New York with positive probability if she is currently employed in New York and she should stay in California with positive probability if she is currently employed in California. They key behind this axiom is the intuition that the status quo is \emph{attention privileged}, and hence always considered. The full support assumption then guarantees that there is some positive probability with which the DM fails to consider any alternative more preferred to the status quo. 

\begin{axiom}[Nontrivial Reference Effect]\label{NRE}
	For any $x\in S$, we have $p_x(x,S)>0$. 
\end{axiom}

The following theorem shows that \nameref{NCC}, \nameref{SQA}, and \nameref{NRE} are necessary and sufficient for an RD-RAM representation. It also states that preference is uniquely revealed in any RD-RAM.

\begin{theorem}
	\label{theorem:RDRAM}
	$\{p_r\}_{r\in X}$ has a RD-RAM representation if and only if it satisfies \nameref{NCC}, \nameref{SQA}, and \nameref{NRE}. Moreover, if $(\succeq,\{\mu_r\}_{r\in X})$ and $(\succeq',\{\mu'_r\}_{r\in X})$ are two RD-RAM representations of $\{p_r\}_{r\in X}$, then $\succeq=\succeq'$.
\end{theorem}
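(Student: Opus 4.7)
The plan is to address necessity, sufficiency, and preference uniqueness in turn. Necessity is routine. Given a representation $(\succeq, \{\mu_r\}_{r\in X})$, the \nameref{NCC} follows because each link $p_{x_i}(x_{i+1}, S_i) > 0$ forces a consideration set in which $x_i$ appears and $x_{i+1}$ is the $\succeq$-maximum, so $x_{i+1} \succ x_i$; transitivity then rules out $p_{x_n}(x_1, S_n) > 0$. Full support of $\mu_x$ gives $\mu_x(\{x\}, S) > 0$, hence $p_x(x,S) > 0$, i.e., \nameref{NRE}. For \nameref{SQA}, observe that if $x \succ y$ then $\mu_y(\{x,y\}, S) > 0$ forces $p_y(x,S) > 0$; so $p_y(x,S) = 0$ implies $y \succ x$, and then $\mu_x(\{x,y\}, T) > 0$ gives $p_x(y,T) > 0$.

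For sufficiency, I would first recover the preference and then construct the attention rule. Define the revealed preference by $x \succ^R y$ whenever $p_y(x, S) > 0$ for some $S$ containing both. The first step is to verify that $\succ^R$ is a strict linear order. Asymmetry follows from \nameref{NCC} with $n=2$. Completeness follows because if $\neg(x \succ^R y)$, then $p_y(x, S) = 0$ for every $S$ containing both, so \nameref{SQA} delivers $p_x(y, T) > 0$, giving $y \succ^R x$. Transitivity follows by chaining $x \succ^R y \succ^R z$ through \nameref{NCC} to obtain $p_x(z, T) = 0$ for every admissible $T$, and then invoking \nameref{SQA} to conclude $p_z(x, T) > 0$, i.e., $x \succ^R z$. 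Let $\succeq$ denote the resulting linear order.

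To construct $\{\mu_r\}$, fix $(S, r)$ and enumerate $S = \{x_1 \succ \cdots \succ x_k\}$ with $r = x_j$. Every admissible $D$ (with $r \in D \subseteq S$) belongs to a unique bucket indexed by $\max_\succeq D = x_i$ for some $i \leq j$. I would split $p_r(x_i, S)$ uniformly across the $D$'s in bucket $i$. For this to produce a valid full-support representation, two things must hold: $p_r(x_i, S) = 0$ for $i > j$ (immediate from $r \succ^R x_i$ and \nameref{NCC}) and $p_r(x_i, S) > 0$ for every $i \leq j$. The case $i = j$ is \nameref{NRE}; for $i < j$, the fact that $x_i \succ^R r$ together with \nameref{NCC} yields $p_{x_i}(r, S_0) = 0$ for all $S_0$, and then \nameref{SQA} delivers $p_r(x_i, T) > 0$ for every admissible $T$. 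The main obstacle is precisely this pointwise positivity: full support must be verified for \emph{every} admissible $D$ in every menu, so the \nameref{NCC}--\nameref{SQA} interplay must propagate positivity across arbitrary menus, not just those that initially witnessed the revealed preference. Preference uniqueness is then immediate: in any representation, $x \succ y$ forces $p_y(x, \{x,y\}) \geq \mu_y(\{x,y\}, \{x,y\}) > 0$, and in any second representation this probability can only come from $D = \{x,y\}$ with $x$ as its $\succeq'$-maximum, so $x \succ' y$.
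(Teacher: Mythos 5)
Your proposal is correct and follows essentially the same route as the paper: the revealed preference is shown to be a linear order via the interplay of \nameref{NCC} and \nameref{SQA}, the attention rule is built by splitting $p_r(x,S)$ uniformly over the consideration sets $D$ with $\{r,x\}\subseteq D\subseteq D_x\cap S$, and positivity of each $p_r(x,S)$ for $x\succeq r$ (hence full support) follows from \nameref{NRE} together with \nameref{NCC}--\nameref{SQA}, exactly as in the paper's argument. The uniqueness argument via binary menus also matches.
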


As previously mentioned, we focus on the full-support RD-RAM throughout this paper as it is implied by all the special cases we consider and it simplifies their exposition. While the full-support RD-RAM may approximate the RD-RAM without full support, the condition is not entirely innocuous. For example, imagine a job seeker who would refuse to accept any job in New York if she were currently employed in California and also would refuse to accept any job in California if she were currently employed in New York.  This suggests that $p_{x_{ca}}(y_{ny},S)=0=p_{y_{ny}}(x_{ca},S)$, a violation of \nameref{SQA}. Similarly, a DM currently employed in Michigan may not want to stay in Michigan at all and may prefer to move to other cities with probability one. This would be in violation of \nameref{NRE}. However, relaxing the full-support assumption in our general model is not too difficult. In fact, \nameref{NCC} fully captures all the behavioral implications of the RD-RAM without full support. 

One possible justification for the full support assumption would be to view the observed choice data as the one generated by a population of agents which share the same preference relation but display heterogeneity in attention. In this case, one can view the full support assumption as equivalent to assuming rich heterogeneity of attention. Of course, assuming that all the agents share the same preference relation would be restrictive in itself. To address this issue, one can assume heterogeneity both in preferences and in attention, as in \cite{kashaev2022random}. However, allowing for heterogeneity both in preferences and in attention is not as straightforward in our setup. One difficulty arises from the fact that once we allow for heterogeneity in preferences, one leads to the assumption that the distribution of preferences depends on the reference point too. However, a general model like this would impose few behavioral restrictions without further assumptions on how the distributions of random preferences across different references are related.


One advantage of the full support assumption is that, as stated in \autoref{theorem:RDRAM}, preference is fully revealed. To see why, notice that we must have $x\succeq y$ if $p_y(x,T)>0$ for some $T$ regardless of the full support assumption. In addition, if $x\succeq y$, then the full support assumption ensures that $p_y(x,T)>0$ for all $T\supseteq \{x,y\}$. The latter ensures that preference is fully revealed. Since the three special cases we consider, the IRA, LRA, and CRA rules, satisfy the full support assumption, our result can be applied to each of these attention rules. 

If the full support assumption is violated, then preference can still be partially revealed from observed choices by letting $x\succeq y$ if $p_y(x,T)>0$ for some $T\supseteq \{x,y\}$. However, without the full support assumption, $\succeq$ may no longer be complete. To infer additional information about preferences, it is then necessary to impose more restrictions on the attention rules. For example, if the attention rules satisfy the monotonicity property of \citet{cattaneo2020random}, then the observation $p_r(x,S)>p_r(x,S\setminus y)$ reveals that $x$ must be preferred to $y$.\footnote{If the attention rules in the RD-RAM also satisfy the monotonicity property of \citet{cattaneo2020random}, then the observed choice rule will satisfy the following weak regularity property: if $p_x(y,T)>0$ for some $T\in \mathcal{X}$, then $p_r(x,S)\leq p_r(x,S\setminus y)$ for all $(S,r)$. On the other hand, if the observed choice rule satisfies the weak regularity property in addition to \nameref{NCC}, \nameref{SQA}, and \nameref{NRE}, then one can construct an RD-RAM representation where the attention rules satisfy the monotonicity property. Hence, weak regularity captures the behavioral implications of the monotonicity property on attention rules.}

In the remainder of this paper, we will use the following terminology. We say that $x$ is the \textit{dominant alternative} in $S$ if $x$ is never abandoned in $S$ when it is the status quo alternative, i.e., $p_x(x,S)=1$. If $\{p_r\}_{r\in X}$ has an RD-RAM representation, then there can be only one dominant alternative in any set. To see this, notice that $p_x(y,S)=0$ implies $p_y(x,S)>0$ by \nameref{SQA}, and hence $p_y(y,S)\neq 1$. In addition, since $p_y(x,S)>0$ only if $x\succ y$, $p_x(x,S)=1$ reveals us that $x$ is $\succeq$-maximal element in $S$.

\subsection{Independent Random Attention}
\label{section: IRA}

In this section we provide a characterization of the IRA model in terms of observed reference-dependent random choices $\{p_r\}_{r\in X}$. An agent whose attention is described by \autoref{attn:ira} exhibits random attention \`a la \citet{manzini2014stochastic} with two additional features. First, the status quo must always be considered, and therefore consideration sets are always non-empty.  This also dispenses with the need for the outside option utilized by  \citet{manzini2014stochastic}. Second, the status quo directs attention via status quo dependent attention mappings, $\gamma_{r}$. Finally, our characterization is quite different from the one established by \citet{manzini2014stochastic} and so is of independent interest. 

\nameref{NCC}, \nameref{SQA}, and \nameref{NRE} are necessary for an IRA representation, since $\{p_r\}_{r\in X}$ is an RD-RAM. We now provide two additional axioms that characterize the IRA representation. The first axiom, Irrelevance of Dominated Alternatives (IDA), states that if $x$ is chosen with a positive probability in some choice set when $y$ is the reference point (and hence $x$ "dominates" $y$), then, regardless of the reference point, removing $y$ from any choice set cannot affect the probability that $x$ is chosen. In other words, dominated alternatives cannot affect the choice probability of dominant alternatives.

\begin{axiom}[Irrelevance of Dominated Alternatives]\label{IDA} For any $(S,r)$ and $x,y\in S$ such that $y\notin \{ r,x\}$, if $p_y(x,T)>0$ for some $T\in \mathcal{X}$, then $$p_r(x,S)=p_r(x,S\setminus y).$$ 
\end{axiom}

To see why IDA is necessary for an IRA representation, notice that under IRA, for all $y \notin D$,
\[\mu_r(D,S\setminus y)=\mu_r(D\cup y,S)+\mu_r(D,S).\]
That is, the probability that $D$ is the consideration set in the choice set $S\setminus y$ is equal to the probability that either $D$ or $D\cup y$ is the consideration set in $S$. If $x$ dominates $y$, then $x$ is the best alternative in $D\subseteq S\setminus y$ if and only if it is the best in $D\cup y$. This together with the above equality imply that removing $y$ from $S$ cannot affect the probability that $x$ is chosen. 

The next property relates relative choice probabilities of two alternatives when the dominant alternative in the choice set is removed from it. It says that if $x$ is the dominant alternative in $S$, then for any two alternatives $y,z\in S\setminus x$ which are chosen with positive probability, the relative probability that $y$ is chosen as opposed to $z$ is unaffected by the removal of $x$. This is a weakening of Luce's Independence of Irrelevant Alternatives (IIA) axiom \citep{luce1959individual}. Recall that Luce's IIA states that this ratio stays the same when we compare arbitrary choice sets $S$ and $T$ that include $y$ and $z$. While this does not hold in general for the IRA model, the dominant alternative in a menu can be considered irrelevant for the relative probabilities of other alternatives. 

\begin{axiom}[Ratio Independence of Dominant Alternatives]\label{RIDA} For any $(S,r)$, $x\in S\setminus r$, and $y,z\in S\setminus x$, if $p_x(x,S)=1$ and $p_r(y,S)p_r(z,S)>0$, then
	$$\frac{p_r(y,S)}{p_r(z,S)}=\frac{p_r(y,S\setminus x)}{p_r(z,S\setminus x)}.$$
\end{axiom}

To see why Ratio Independence of Dominant Alternatives (RIDA) holds in the IRA model, notice that since $x$ is the dominant alternative in $S$, removing it from the menu does not affect which consideration sets have $y$ or $z$ as the best alternative; if $y$ (or $z$) is the best alternative in $D\subseteq S$, then $x$ cannot be in $D$ so that $D\subseteq S\setminus x$. In addition, independent random attention implies that for any $D,D'\subseteq S\setminus x$,
\[\frac{\mu_r(D,S)}{\mu_r(D',S)}=\frac{\mu_r(D,S\setminus x)}{\mu_r(D',S\setminus x)}.\]
This equation, combined with the above observation, imply that RIDA is necessary for an IRA representation. 

We are now ready to provide a characterization result for the IRA model. 

\begin{theorem}
	\label{theorem:IRA}
	Assume that $\{p_r\}_{r\in X}$ is an RD-RAM. Then $\{p_r\}_{r\in X}$ has an Independent Random Attention representation if and only if it satisfies \nameref{IDA} and \nameref{RIDA}. 
\end{theorem}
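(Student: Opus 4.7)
The theorem asserts two implications. For necessity, \nameref{IDA} and \nameref{RIDA} can be verified directly from the IRA formula, following the intuitions given in the text. Under independence, $\mu_r(D, S \setminus \{y\}) = \mu_r(D, S) + \mu_r(D \cup \{y\}, S)$ for any $y \in S \setminus D$; when $x \succ y$, the alternative $x$ is the $\succeq$-maximum of $D$ iff of $D \cup \{y\}$, and summing over such $D$'s yields $p_r(x, S) = p_r(x, S \setminus \{y\})$. For \nameref{RIDA}: because no consideration set whose maximum is $y \neq x$ can contain the dominant $x$, the IRA formula collapses to $p_r(y, S) = (1 - \gamma_r(x)) p_r(y, S \setminus \{x\})$ for every $y \in S \setminus \{x\}$, so the common factor $(1 - \gamma_r(x))$ cancels in the required quotient.

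For sufficiency, the plan is constructive. Given the RD-RAM with its uniquely revealed preference $\succeq$ (by \autoref{theorem:RDRAM}), I would define $\gamma_r(r) = 1$, $\gamma_r(x) = p_r(x, \{r, x\})$ for $x \succ r$, and fix $\gamma_r(x)$ to an arbitrary value in $(0, 1)$ for $x \prec r$ (these latter values never influence observed choice, since $r$ is always considered and dominates every $x \prec r$, so $p_r(x, S) = 0$). Full support of the RD-RAM gives $p_r(x, \{r, x\}) > 0$ for $x \succ r$, while \nameref{NRE} at $\{r, x\}$ gives $p_r(r, \{r, x\}) > 0$, so $\gamma_r(x) \in (0, 1)$. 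The core task is then to establish, by induction on $|S|$, that
\[ p_r(x, S) = \gamma_r(x) \prod_{z \in S,\, z \succ x}(1 - \gamma_r(z)) \]
for every $x \in S$ with $x \succeq r$. Once this holds, setting $\mu_r(D, S) = \prod_{x \in D} \gamma_r(x) \prod_{y \in S \setminus D}(1 - \gamma_r(y))$ yields an IRA representation whose induced choice probabilities coincide with $p_r$, and $\gamma_r(r) = 1$ ensures $\mu_r$ is a full-support attention rule supported on subsets containing $r$.

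The inductive step splits on whether $r$ is $\succeq$-maximal in $S$. If so, $p_r(r, S) = 1$ matches the empty product. Otherwise let $z^* = \max_{\succeq} S \succ r$. First, iterated application of \nameref{IDA}---stripping each element of $S \setminus \{r, z^*\}$, every one of which is dominated by $z^*$---gives $p_r(z^*, S) = p_r(z^*, \{r, z^*\}) = \gamma_r(z^*)$. Second, \nameref{RIDA} at $(S, r)$ with $z^*$ as dominant provides $p_r(y, S) / p_r(r, S) = p_r(y, S \setminus \{z^*\}) / p_r(r, S \setminus \{z^*\})$ for every $y \in S \setminus \{z^*\}$ with $y \succeq r$. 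Summing over all $y \in S \setminus \{z^*\}$ (those with $y \prec r$ contribute zero on both sides), and using $\sum_y p_r(y, S) = 1 - \gamma_r(z^*)$ on the left together with $\sum_y p_r(y, S \setminus \{z^*\}) = 1$ on the right, I extract $p_r(r, S) = (1 - \gamma_r(z^*))\, p_r(r, S \setminus \{z^*\})$. The inductive hypothesis then yields the desired formula for $p_r(r, S)$, and a second use of the same \nameref{RIDA} ratio promotes it to the full formula for every $p_r(x, S)$ with $r \preceq x \prec z^*$.

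The central obstacle is recovering the \emph{absolute} probability $p_r(r, S)$ from what \nameref{RIDA} alone conveys only about ratios. The resolution is the summation argument above, which couples \nameref{RIDA} with the \nameref{IDA}-derived scalar $p_r(z^*, S) = \gamma_r(z^*)$ and the unit-mass identity on $S \setminus \{z^*\}$ to turn a family of ratios into an exact multiplier $(1 - \gamma_r(z^*))$. A secondary care point is that $p_r(y, S)$ and $p_r(r, S)$ must be strictly positive wherever the axioms are invoked; both follow from full support of the RD-RAM combined with \nameref{NRE}, which guarantee that any alternative weakly above $r$ is chosen with positive probability in any menu containing it and $r$.
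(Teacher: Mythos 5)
Your proposal is correct and follows essentially the same route as the paper: both hinge on combining the \nameref{RIDA} ratio with a summation over $S\setminus\{z^*\}$ and an iterated application of \nameref{IDA} to extract the exact multiplier $1-\gamma_r(z^*)=p_r(r,\{z^*,r\})$, then inducting on $|S|$ (the paper's Lemma~1 is precisely your summation step). The only differences are cosmetic --- you verify the closed-form product formula for $p_r(x,S)$ while the paper verifies the sum-over-consideration-sets identity directly, and you set $\gamma_r(x)$ arbitrarily for $x\prec r$ where the paper uses $p_x(r,\{r,x\})$.
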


Here we sketch the main steps of the proof (full details are in \autoref{section: proofs}). First, an implication of RIDA is that for any $(S,r)$, if $x$ is the dominant alternative in $S$, the ratio $p_r(y,S)/p_r(y,S\setminus x)$ is constant for all $y\in S$ with $p_r(y,S)>0$. Then, IDA implies that this constant is equal to $p_r(r,\{x,r\})$, which is strictly positive by \nameref{NRE}. Next, for any $r,z\in X$, we define $\gamma_r(z)$ to be the maximum of $p_r(z,\{z,r\})$ and $p_z(r,\{z,r\})$. The proof then proceeds by showing the representation for binary choice sets and using induction together with the previous conclusion to extend the representation to all choice sets.

In terms of uniqueness, preference is uniquely identified from observed choices in the IRA model, since it is a special case of the RD-RAM. Attention maps are uniquely identified in the upper contour set of reference alternatives. Since the value of $\gamma_r(z)$ does not affect choices when $r\succ z$, attention maps are not identified in the lower contour set of reference alternatives.  \autoref{theorem:RDRAM} and \autoref{theorem:IRA} imply the following corollary.

\begin{corollary}
	\label{corollary2}
	Assume that $\{p_r\}_{r\in X}$ has an Independent Random Attention representation. Then, if $(\succeq,\{\gamma_r\}_{r\in X})$ and $(\succeq',\{\gamma'_r\}_{r\in X})$ are two IRA representations of $\{p_r\}_{r\in X}$, we have that $\succeq=\succeq'$ and $\gamma_r(x)=\gamma'_r(x)$ for any $r\in X$ and $x\succeq r$. 
\end{corollary}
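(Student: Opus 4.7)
The plan is to derive preference uniqueness from Theorem \ref{theorem:RDRAM} and then to read off each relevant attention weight directly from a binary choice probability. Since every IRA representation is in particular an RD-RAM representation, and Theorem \ref{theorem:RDRAM} already guarantees that the preference in any RD-RAM representation is unique, the equality $\succeq\,=\,\succeq'$ is immediate. In particular, the upper contour set $\{x : x \succeq r\}$ is a well-defined object that does not depend on which of the two IRA representations we pick.

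For the attention maps, I would handle the two cases $x=r$ and $x \succ r$ separately. If $x = r$, then the definition of an IRA representation forces $\gamma_r(r) = 1 = \gamma'_r(r)$, so there is nothing to prove. If $x \succ r$, I would compute $p_r(x, \{x,r\})$ using \autoref{attn:ira}: the only $D$ with $r \in D \subseteq \{x,r\}$ whose $\succeq$-maximal element is $x$ is $D = \{x,r\}$, so
\[
p_r(x, \{x,r\}) \;=\; \gamma_r(x)\,\gamma_r(r) \;=\; \gamma_r(x).
\]
Performing the identical computation for the primed representation gives $p_r(x, \{x,r\}) = \gamma'_r(x)$, and therefore $\gamma_r(x) = \gamma'_r(x)$.

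There is essentially no obstacle beyond verifying that the binary-menu computation above is valid in both representations, which it is because $\succeq$ and $\succeq'$ agree and both representations use the same formula \autoref{attn:ira}. The only conceptual point worth flagging is why the identification does not extend below $r$: when $r \succ x$, the value of $\gamma_r(x)$ does not affect the $\succeq$-maximal element of any consideration set containing both $r$ and $x$ (since $r$ already dominates $x$ there), and a short inspection of \autoref{attn:ira} shows that changing $\gamma_r(x)$ leaves every observed choice probability $p_r(\cdot,S)$ unchanged. This is precisely why the corollary asserts identification of $\gamma_r$ only on $\{x : x \succeq r\}$ rather than globally on $X$.
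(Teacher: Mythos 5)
Your proposal is correct and follows essentially the same route as the paper: preference uniqueness is inherited from \autoref{theorem:RDRAM} since any IRA representation is an RD-RAM, and the attention weight for $x\succ r$ is pinned down by the binary-menu identity $p_r(x,\{x,r\})=\gamma_r(x)$, which is exactly the identification used in the construction $\gamma_r(x)=\max\{p_r(x,\{r,x\}),p_x(r,\{r,x\})\}$ in the proof of \autoref{theorem:IRA}. Your closing remark on why $\gamma_r$ is not identified on the lower contour set also matches the paper's discussion.
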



\subsection{Luce Random Attention}
\label{section: LRA}

We now provide a characterization for the Luce Random Attention model. As discussed in the previous section, LRA is a generalization of IRA, and it relaxes one of the axioms of IRA. Since the discussion following \autoref{RIDA} also holds for LRA, RIDA is still necessary for an LRA representation. However, LRA no longer satisfies the IDA property. In fact, in the LRA model the choice probability of an alternative may increase or decrease when a dominated alternative is added to a menu.

Our next example illustrates that RIDA by itself is not sufficient for a Luce Random Attention representation while shedding light on the additional axiom that we need for the characterization. 

\begin{example}[Insufficiency of RIDA]
	\label{example: RIDA_logit}
	Let $X=\{x,y,z\}$ be the set of alternatives, and consider a DM represented by some RD-RAM. By  \autoref{theorem:RDRAM}, her preference is uniquely identified from observed choices. Let $x\succ y\succ z$ be the identified preference. Suppose that when the DM's reference point is $z$, 
	her choices satisfy 
	$$p_z(x,\{x,y,z\})<p_z(z,\{x,y,z\}) \text{ and }p_z(a,S)=p_z(z,S) \text{ for any other } a\in S\subseteq X.$$
	In other words, all alternatives are chosen with the same probability as the reference alternative $z$, except for $x$ in $\{x,y,z\}$ where $x$ is chosen with a smaller probability. Since $x$ is the dominant alternative in $\{x,y,z\}$ and $p_z(y,S)=p_z(z,S)$ for $S\in \{\{x,y,z\}, \{y,z\}\}$, the observed choices satisfy RIDA. However, if $\{p_r\}_{r\in X}$ has a Luce Random Attention representation, we cannot have $p_z(x,\{x,y,z\})<p_z(z,\{x,y,z\})$ given the other observations. To see this, note that since $z$ is the worst alternative, for every choice set $S$, $z$ is only chosen when the DM has the degenerate consideration set $\{z\}$. On the other hand, $x$ is chosen from $\{x,y,z\}$ when the consideration sets are either $\{x,y,z\}$ or $\{x,z\}$, and it chosen from $\{x,z\}$ when the consideration set is $\{x,z\}$. This implies that, in the Luce Random Attention model, the relative choice probability of $z$ and $x$ should decrease when $y$ becomes available. The observed choices violate this condition. 
\end{example}

Motivated by this example, we introduce an axiom which captures the intuition that the reference alternative should be less likely to be chosen when alternatives that dominate it become available. Before stating the axiom, we require the following notation. Let $\mathcal{D}_r$ be the collection of subsets of $X$ in which $r$ is the dominant alternative, and let $P_r$ denote the set of alternatives which dominate $r$ (i.e., $s\in P_r \Leftrightarrow r\neq s \text{ and }p_r(s,T)>0 \text{ for some }T\supseteq \{r,s\}$). For any $(S,r)$, the odds that an alternative other than the reference point is chosen (i.e., the odds against the reference alternative) is defined by $\mathcal{O}^{r}_{S}=\frac{1-p_r(r,S)}{p_r(r,S)}$. For any set $U\not\ni r$, let $\Delta_U \mathcal{O}^r_S=\mathcal{O}^{r}_{S}-\mathcal{O}^{r}_{S\setminus U}$ reflect the change in odds against the reference alternative when the choice set is expanded from $S\setminus U$ to $S$. For any collection $\mathcal{U}=\{U_1,\cdots,U_n\}$, let $\Delta_{\mathcal{U}}\mathcal{O}^r_S$ be iteratively defined by $\Delta_{\mathcal{U}}\mathcal{O}^r_S=\Delta_{\mathcal{U}\setminus U_1}\mathcal{O}^r_S-\Delta_{\mathcal{U}\setminus U_1}\mathcal{O}^r_{S\setminus U_1}$. The next axiom has the implication that the odds against the reference alternative increases, or alternatively, the odds for the reference alternative decreases at an increasing rate as the choice set is expanded by introducing more alternatives that dominate the reference alternative. This axiom is the analog of the Increasing Feasible Odds axiom in \cite{brady2016menu} for the LRA model with reference-dependent stochastic choice data. 

\begin{axiom}[Decreasing Odds for the Reference Alternative]
	\label{DORA}
	For any $(S,r)$ and any collection $\mathcal{U}=\{U_1,\cdots,U_n\}$ such that $U_i\subseteq P_r$ and $S\cap U_i\neq \emptyset$ for any $i\in \{1,\dots, n\}$,
	$$\Delta_{\mathcal{U}}\mathcal{O}^r_S>0.$$
\end{axiom}

Notice that \autoref{DORA} is silent on the effect of changing a menu by introducing new alternatives that are dominated by the reference alternative, while IDA requires that the choice probabilities of all alternatives stay the same in this case. In general, while IDA and \autoref{DORA} are independent axioms, from the representation theorem for the IRA model we can conclude that IDA together with RIDA imply \autoref{DORA}. 

We now provide the representation result for the Luce Random Attention model. 

\begin{theorem}
	\label{theorem:logit}
	Assume that $\{p_r\}_{r\in X}$ is an RD-RAM. Then $\{p_r\}_{r\in X}$ has a Luce Random Attention representation if and only if it satisfies \nameref{RIDA} and \nameref{DORA}. 
\end{theorem}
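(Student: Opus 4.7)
For necessity, I would handle RIDA by the same argument as in the IRA case: under LRA the ratio $\mu_r(D,S)/\mu_r(D',S)=\pi_r(D)/\pi_r(D')$ is independent of $S$, and when $x$ is the dominant alternative in $S$ the consideration sets yielding an alternative other than $x$ are precisely the subsets of $S\setminus x$, so the ratio $p_r(y,S)/p_r(z,S)$ is unchanged by removing $x$. For DORA, I would partition $\pi_r$-mass by its intersection with $P_r$: writing $S=S^-\cup U\cup\{r\}$ with $S^-=S\setminus(P_r\cup\{r\})$ and $U=S\cap P_r$, the odds against $r$ admit the additive expansion
\[\mathcal{O}^r_S=\sum_{\emptyset\neq T\subseteq U}\rho_r(T,S^-),\]
where $\rho_r(T,S^-)>0$ is the $\pi_r$-mass of consideration sets with $P_r$-intersection $T$, normalized by the mass of those disjoint from $P_r$. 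Iterated differences of an additive function of $U$ collapse to the transversal sum $\sum\{\rho_r(T,S^-):T\cap U_i\neq\emptyset\text{ for every }i\}$, which is strictly positive by full support of $\pi_r$.

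For sufficiency, I would invoke \autoref{theorem:RDRAM} to extract the unique preference $\succeq$, fix a reference $r$, and recover the additive coefficients of the odds function by Möbius inversion: for each $\emptyset\neq T\subseteq P_r$ and $S^-\subseteq X\setminus(P_r\cup\{r\})$, set
\[\rho_r(T,S^-)=\sum_{T'\subseteq T}(-1)^{|T\setminus T'|}\mathcal{O}^r_{S^-\cup T'\cup\{r\}}.\]
DORA applied to $\mathcal{U}=\{\{y\}:y\in T\}$ on the menu $S^-\cup T\cup\{r\}$ yields $\rho_r(T,S^-)>0$, while the boundary value $\mathcal{O}^r_{S^-\cup\{r\}}=0$ (automatic in any RD-RAM, since $S^-$ contains no alternatives preferred to $r$) allows inclusion-exclusion to reproduce $\mathcal{O}^r_{S^-\cup U\cup\{r\}}=\sum_{\emptyset\neq T\subseteq U}\rho_r(T,S^-)$ for every $U\subseteq P_r$.

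The final step is to construct $\pi_r(D)$ by induction on $|D|$. On sets $D\subseteq P_r\cup\{r\}$ this is a reference-dependent adaptation of the construction in \cite{brady2016menu}, pinning down $\pi_r$ recursively so that the prescribed aggregates $\rho_r(T,\emptyset)$ are matched, with positivity delivered by DORA. For $D$ containing alternatives dominated by $r$, I would allocate additional mass freely subject to the constraint that the aggregate $\lambda_r(T,S^-):=\sum_{D^-\subseteq S^-}\pi_r(\{r\}\cup D^-\cup T)$ satisfies $\lambda_r(T,S^-)=\rho_r(T,S^-)\cdot\lambda_r(\emptyset,S^-)$ for every $(T,S^-)$, using the freedom in $\lambda_r(\emptyset,\cdot)$ to preserve positivity under Möbius inversion. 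RIDA is then invoked to verify that the non-reference choice probabilities $p_r(x,S)$ are reproduced: peeling the dominant alternative from each menu, RIDA supplies the menu-deletion identity that makes the LRA-predicted ratios for non-reference alternatives match the observed ratios $p_r(y,S)/p_r(z,S)$. The main technical obstacle I expect is exactly this consistency step — showing that the inductively defined weights simultaneously satisfy every odds equation across all $(T,S^-)$ while remaining positive — and it is here that DORA's global positivity of iterated differences and RIDA's menu-independence of cross-ratios must interlock to force a coherent weight system.
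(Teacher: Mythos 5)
Your proposal follows essentially the same route as the paper: the odds $\mathcal{O}^r_S$ are shown to be additive over the $P_r$-part of the menu, the coefficients are recovered by M\"obius inversion and shown strictly positive via \nameref{DORA}, the weights on consideration sets containing $r$-dominated alternatives are chosen with enough slack to keep the whole system positive, and \nameref{RIDA} closes the induction by deleting the dominant alternative and matching ratios against $p_r(r,S)$. The one step you flag as the main obstacle --- making the inductively defined weights satisfy every odds equation while remaining positive --- is resolved in the paper exactly as you anticipate, by an induction on $|S\setminus P_r|$ in which $\lambda_r(T)$ for $T\in\mathcal{D}_r$ is chosen sufficiently large, i.e.\ the ``freedom in $\lambda_r(\emptyset,\cdot)$'' you describe.
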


In terms of uniqueness, preference is uniquely identified in any RD-RAM. For the result on the identification of Luce weights, see \cite{suleymanov2018}.

\subsection{Constant Random Attention}
\label{section: CRA}

Our last characterization is for the Constant Random Attention model. Since the discussion following \autoref{IDA} also holds for CRA, IDA is necessary for a CRA representation. However, CRA no longer satisfies RIDA. The next example illustrates that IDA by itself is not sufficient for a Constant Random Attention representation. It also illustrates which additional axiom we need for the characterization.

\begin{example}[Insufficiency of IDA]
	\label{example: IDA_constant}
	Let $X=\{x,y,z\}$ be the set of alternatives, and consider a DM whose choices can be represented by an RD-RAM. Let $x\succ y\succ z$ be the DM's preference. Suppose when the DM's reference point is $z$, her choices satisfy 
	$$p_z(z,\{y,z\})<p_z(z,\{x,y,z\}) \quad \text{ and } \quad p_z(x,\{x,z\})=p_z(x,\{x,y,z\}).$$
	The equality ensures that IDA is satisfied. However, the first observation cannot be consistent with the CRA model. To see this, note that in the CRA model we must have $p_z(z,\{x,y,z\})=\pi_z(\{z\})<\pi_z(\{z\})+\pi_z(\{x,z\})=p_z(z,\{y,z\})$. 
\end{example}

In \autoref{example: IDA_constant}, we have a violation of \textit{regularity} \citep{Suppes-Luce_1965_Handbook}, a condition that states that an alternative should be less likely to be chosen when there are more competing alternatives. As we will discuss later, any choice data consistent with the CRA model must also satisfy regularity. Since IDA imposes a restriction on observed choices only when dominated alternatives are removed from the choice set, it does not impose any restriction on $p_z(z,\{x,y,z\})-p_z(z, \{y,z\})$, and hence it is not sufficient for a CRA representation. In fact, for \autoref{example: IDA_constant} to be consistent with the CRA model, observed choices must satisfy $p_z(z,\{x,y,z\})<p_z(z,\{y,z\})$ and $p_z(y,\{x,y,z\})<p_z(y,\{y,z\})$.

Next, we introduce the axiom that helps us characterize the CRA model. As before, $P_r$ is the set of alternatives which dominate the reference alternative. For any $(S,r)$ and $U\not \ni r$, let $\Delta_{U}p_r(r,S)=p_r(r,S\setminus U)-p_r(r,S)$. This reflects the change in the probability of choosing the reference alternative when the choice set shrinks. For any collection $\mathcal{U}=\{U_1,\cdots,U_n\}$, let $\Delta_{\mathcal{U}}p_r(r,S)$ be defined by $\Delta_{\mathcal{U}}p_r(r,S)=\Delta_{\mathcal{U}\setminus U_1}p_r(r,S\setminus U_1)-\Delta_{\mathcal{U}\setminus U_1}p_r(r,S)$. As \autoref{example: IDA_constant} illustrates, the CRA model implies that $\Delta_{U}p_r(r,S)>0$ if $U\subseteq P_r$ and $U\cap S\neq \emptyset$. In addition, consider $U'\subseteq P_r$ such that $U'\cap S\neq \emptyset$. The CRA model also implies that $\Delta_{U}p_r(r,S\setminus U')>\Delta_{U}p_r(r,S)$ (i.e, $\Delta_{\{U',U\}}p_r(r,S)>0$). That is, when a choice set contains fewer alternatives that dominate the reference alternative, removing another dominating alternative should increase the choice probability of the reference alternative by more. The idea is that when there are fewer alternatives that the reference alternative competes against, removing a dominating alternative from the menu should have a bigger impact on the reference alternative. We formalize this intuition in the following axiom, which is the analog of the axiom in \citet*{aguiar2017random} for the CRA model with reference-dependent stochastic choice data.  

\begin{axiom}[Decreasing Propensity of Choice for the Reference Alternative] 
	\label{DPCRA}
	For any $(S,r)$ and a collection $\mathcal{U}=\{U_1,\cdots,U_n\}$ such that $U_i\subseteq P_r$ and $S\cap U_i\neq \emptyset$ for every $i\in \{1,\dots, n\}$,
	$$\Delta_{\mathcal{U}}p_r(r,S)>0.$$
\end{axiom}

Notice that similar to \autoref{DORA}, \autoref{DPCRA} imposes a structure on observed choices when the choice set is changed by adding or removing alternatives that dominate the reference point. On the other hand, IDA imposes a structure on observed choices when we are adding or removing dominated alternatives. Since any alternative either dominates or is dominated by the reference alternative, IDA together with \autoref{DPCRA} imply regularity for the reference point: for any $S\supseteq T\ni r$, $p_r(r,T)\geq p_r(r,S)$. In fact, they jointly imply regularity for all alternatives, as the representation and the preceding discussion make clear. This is in contrast to the LRA model, which imposes no structure on observed choices when an alternative dominated by the reference point is added to the choice set. Lastly, while RIDA and \autoref{DPCRA} are independent axioms (RIDA by itself does not imply regularity for the reference alternative when an alternative dominating it is removed from the menu), from the representation theorem for the IRA model we can conclude that IDA together with RIDA imply \autoref{DPCRA}. 

We can now provide a characterization result for the CRA model. 

\begin{theorem}
	\label{theorem:CRA}
	Assume that $\{p_r\}_{r\in X}$ is an RD-RAM. Then $\{p_r\}_{r\in X}$ has a Constant Random Attention representation if and only if it satisfies \nameref{IDA} and \nameref{DPCRA}. 
\end{theorem}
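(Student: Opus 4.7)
The plan is to handle necessity quickly and then construct an explicit CRA representation for sufficiency. For necessity of IDA, note that in any CRA representation $p_y(x,T)>0$ forces $x\succ y$ by uniqueness of the RD-RAM preference (\autoref{theorem:RDRAM}), so for any $D\ni r$ the event $\{x=\arg\max(\succeq,D\cap S)\}$ coincides with $\{x=\arg\max(\succeq,D\cap(S\setminus y))\}$; removing an element strictly dominated by $x$ cannot alter whether $x$ is the argmax. Averaging over $\pi'_r$ yields IDA. For necessity of DPCRA, write $p_r(r,S\setminus V)=\Pr_{\pi'_r}[R\subseteq V]$ with $R=D\cap S\cap P_r$, expand $\Delta_{\mathcal{U}}\,p_r(r,S)$ as an alternating sum over $\mathcal{V}\subseteq\mathcal{U}$, and identify it by inclusion-exclusion with the probability that $R$ meets every $U_i\cap S$. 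Full support of $\pi'_r$ together with $U_i\cap S\neq\emptyset$ for each $i$ makes this strictly positive.

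For sufficiency, invoke \autoref{theorem:RDRAM} to obtain the unique RD-RAM preference $\succeq$. For each reference $r$, write $P_r=\{x:x\succ r\}$ and define a candidate distribution on subsets of $P_r$ by Möbius inversion,
\[
Q_r(T):=\sum_{V\subseteq T}(-1)^{|T|-|V|}\,p_r\bigl(r,\,(P_r\setminus V)\cup\{r\}\bigr),\qquad T\subseteq P_r.
\]
A direct rewriting shows $Q_r(T)=\Delta_{\mathcal{U}_T}\,p_r(r,P_r\cup\{r\})$ with $\mathcal{U}_T=\{\{t\}:t\in T\}$, so DPCRA immediately gives $Q_r(T)>0$. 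A telescoping argument using $(1-1)^{|P_r|-|V|}$ collapses $\sum_{T\subseteq P_r}Q_r(T)$ to $p_r(r,\{r\})=1$, confirming $Q_r$ is a full-support probability measure on $2^{P_r}$. Extend $Q_r$ to a full-support $\pi'_r$ on $\{D:r\in D\subseteq X\}$ by setting $\pi'_r(D)=Q_r(D\cap P_r)\,q_r\bigl(D\setminus(P_r\cup\{r\})\bigr)$, where $q_r$ is any strictly positive distribution on subsets of $\{y:y\prec r\}$; the elements below $r$ are immaterial to choice since $r\in D$ always.

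It remains to verify $p_r(x,S)=\Pr_{\pi'_r}[x=\arg\max(\succeq,D\cap S)]$ for every $(S,r)$ and $x\in S$. The case $x\prec r$ is trivial. For $x=r$, iterated IDA (valid because every $y\prec r$ satisfies $p_y(r,\{r,y\})>0$) deletes each $y\in S$ with $y\prec r$ without changing $p_r(r,S)$, reducing to $S\subseteq P_r\cup\{r\}$ where the equality is the defining relation for $Q_r$. For $x\succ r$, with $U(x)=\{y:y\succ x\}$ and $U^{\geq}(x)=U(x)\cup\{x\}$, the CRA prediction simplifies to $p_r\bigl(r,(S\cap U(x))\cup\{r\}\bigr)-p_r\bigl(r,(S\cap U^{\geq}(x))\cup\{r\}\bigr)$. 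Iterating IDA yields $p_r(z,S)=p_r\bigl(z,(S\cap U^{\geq}(x))\cup\{r\}\bigr)$ for every $z\succeq x$ in $S$ and $p_r(z,S)=p_r\bigl(z,(S\cap U(x))\cup\{r\}\bigr)$ for every $z\succ x$ in $S$. Because each reduced menu consists only of $r$ and the above-$x$ elements it retains, summing and complementing give $\sum_{z\in S,\,z\succeq x}p_r(z,S)=1-p_r\bigl(r,(S\cap U^{\geq}(x))\cup\{r\}\bigr)$ and $\sum_{z\in S,\,z\succ x}p_r(z,S)=1-p_r\bigl(r,(S\cap U(x))\cup\{r\}\bigr)$; subtracting these two equations delivers exactly the CRA prediction for $p_r(x,S)$. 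The main obstacle is this $x\succ r$ verification: the Möbius-built $Q_r$, constructed solely from reference-choice frequencies, must reproduce all non-reference frequencies, which demands a careful interlocking of IDA (which prunes menus) with the defining identity for $Q_r$; the strict positivity in DPCRA is essential, as it is precisely what guarantees that the inverted $Q_r$ is a bona fide full-support distribution.
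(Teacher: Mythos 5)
Your proposal is correct, and its core construction coincides with the paper's: your $Q_r(T)$ is exactly the paper's $\lambda_r(D_r\cup T)$ (the menus $(P_r\setminus V)\cup\{r\}$ and $(P_r\setminus V)\cup D_r$ give the same reference-choice probabilities by iterated IDA), your identity $Q_r(T)=\Delta_{\mathcal{U}_T}p_r(r,P_r\cup\{r\})$ with singleton $\mathcal{U}_T$ is precisely how the paper's Lemma on $\lambda_r$ extracts positivity from \nameref{DPCRA}, and the telescoping normalization and the free extension of $Q_r$ to a full-support $\pi'_r$ over the elements below $r$ both appear in the paper. Where you genuinely depart is the final verification. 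The paper proceeds by induction on $|S|$, using \nameref{IDA} to delete the least $\succeq$-ranked alternative that still dominates $r$ and handling the residual alternative by complementation at each step. You instead give a direct, non-inductive argument: iterated IDA collapses $S$ onto the upper-contour menus $(S\cap U^{\geq}(x))\cup\{r\}$ and $(S\cap U(x))\cup\{r\}$, and $p_r(x,S)$ is recovered as the difference of the two reference-choice ``survival'' probabilities, each of which is pinned down by the $x=r$ case. This buys a cleaner closed-form verification with no case analysis on how many alternatives dominate $r$, at the cost of having to check carefully (as you do) that IDA applies to every deleted element relative to every retained $z\succeq x$. One small caution: your inclusion--exclusion identification of $\Delta_{\mathcal{U}}p_r(r,S)$ in the necessity step with ``the probability that $R$ meets every $U_i\cap S$'' omits the accompanying containment condition $R\subseteq\bigcup_i U_i$ (the exact identity, as in the paper, holds cleanly for pairwise disjoint $U_i$); this does not affect the strict positivity conclusion you need, but the identification should be stated with that qualifier.
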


As before, preference is uniquely identified in any RD-RAM. For the result on the identification of consideration set probabilities, see \citet{suleymanov2018}.

\autoref{Fig: axioms} illustrates the relationship between all the axioms and summarizes Theorems 2-4. Axiom 4 and 5 jointly characterize the IRA model, which in turn satisfies Axiom 6 and 7. The intersection of Axiom 5 and 6 characterizes the LRA model, while the intersection of Axiom 4 and 7 characterizes the CRA model.

\begin{figure}[h!]
	\centering
	
\includegraphics[width=9cm]{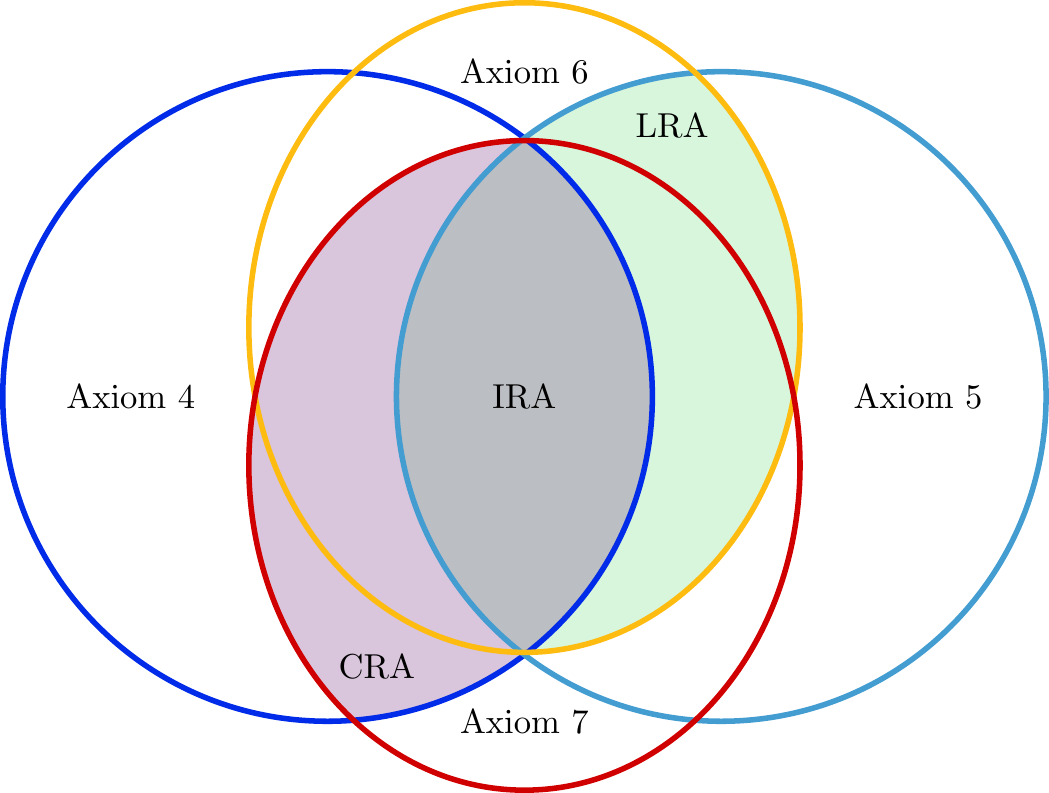}

	\caption{Relationship between axioms.}
	\label{Fig: axioms}
\end{figure}

\section{Discussion}\label{section: discussion}

\subsection{Behavioral Implications}\label{behavior}

\textbf{Generalized Status Quo Bias:} A frequently observed choice pattern involves a frequency reversal between non-status quo alternatives and is referred to as \emph{generalized status quo bias}.  Such reversals occur when an agent chooses $x$ over $y$ when $z$ is the status quo, but the agent chooses $y$ over $x$ when $z'$ is the status quo. As an illustration, consider a traveler selecting a meal upgrade when the default is either a vegetarian or a meat-based meal. Intuitively, the vegetarian default directs her attention toward vegetarian dishes; the meat default directs her attention toward meat dishes. Such choice patterns have been observed in experiments (see \citet{dean2017limited}), but are difficult to explain with most deterministic models of status quo bias.  The IRA rule (and hence CRA and LRA rules) is able to accommodate such reversals by utilizing the idea that the reference point may shift attention between different alternatives. 

\textbf{Category Bias:} A choice pattern that may generate generalized status quo bias is the notion of \emph{category bias}, in which a reference $r$ focuses attention towards goods of the same type. For instance, in choice under uncertainty a DM endowed with a lottery is more likely to choose another lottery than when endowed with a sure payoff.  Category bias can also be captured by the simple IRA rule. For example, suppose $X$ is partitioned into two categories, $X_1$ and $X_2$, where $X(r)$ denotes the category for $r$.  For $0 < \underline{\delta} < \overline{\delta} <1$, let $\gamma_r (x) = \overline{\delta}$ if $x \in X(r)$ and $\gamma_r (x) = \underline{\delta}$ otherwise. It is straightforward to augment this example, say by including alternative-category interactions or more categories, to capture richer attention patterns.  For a deterministic model of category bias and additional discussion, see \cite{Maltz2020category}.  


\textbf{Salience-based Consideration:} One way to add richness to the previous examples is to directly model menu-dependence of attention probabilities. For instance, on crowded store shelves an alternative's color interacts with the colors of the nearby alternatives to direct visual attention. One way to incorporate this richness is to consider a menu-dependent generalization of the IRA rule where $\gamma_r(x)$ depends arbitrarity on $S$. 

As an example, consider the case where the status quo $r$ shifts attention towards goods based on their attributes. In particular, let $X\subseteq \mathbb{R}^N$ where $N$ denotes the number of attributes. Consider a salience function $\sigma: X \times \mathcal{X} \rightarrow \{1, \dots, N\}$ that determines the salient attribute for each potential status quo alternative in any given menu. We can then assume $\gamma_r(x|S) \geq \gamma_r(y|S)$ if $x_{\sigma(r,S)} \geq y_{\sigma(r,S)}$. That is, the DM is more likely to pay attention to $x$ than $y$ if $x$ is ranked higher than $y$ in the salient attribute. More general forms of salience may also be captured through this menu-dependent IRA specification.

\textbf{Choice Overload:} One common observation in the literature on status quo bias is that the agent's likelihood of sticking with the status quo may increase with the number of available alternatives (see \cite{iyengar2000choice} and \citet{Dean2008}).  Accommodating this feature in models of status quo bias has so far been difficult, as nearly all models predict the opposite: adding alternatives should increase the likelihood of abandoning the status quo. One exception is the recent paper by \citet{dean2017limited}, which also combines limited attention with status quo bias. 

Note that choice overload is consistent with the intuition that consideration becomes more taxing in larger menus. That is, when the choice set becomes larger, the probability that the DM considers a particular alternative decreases. Since the reference alternative is attention privileged, this increases the choice probability of the reference alternative in larger choice sets. Since in both IRA and CRA rules an alternative is considered with a fixed probability regardless of the choice set, they cannot capture this choice pattern. On the other hand, in the LRA model, alternatives may be considered with a lower probability in larger choice sets, which allows it to capture choice overload.

\subsection{Status Quo Monotonicity}

While it is now abundantly clear that the status quo may have a significant impact on choice frequencies, one may wonder if an alternative always \emph{benefits} from being the status quo. If $x$ becomes the status quo, must it be chosen with higher probability? This section formalizes this idea by introducing a condition we refer to as \nameref{SQM} (SQM).

\begin{definition}[Status Quo Monotonicity]\label{SQM} Say that a DM exhibits \textbf{Status Quo Monotonicity} if for every choice set $S$ and $r,x \in S$, \[p_x(x,S)\geq p_r(x,S).\]
\end{definition}

While this condition is intuitive, it is not obvious if or when it holds in the RD-RAM. This is because when the status quo changes, say from $r$ to $x$, there are two effects. The first effect is that $x$ is now assured consideration, and this acts to increase the likelihood of choosing $x$. The second effect is that $x$ may now direct attention towards different alternatives. If the DM is now more likely to consider an alternative that dominates $x$, then this decreases the likelihood of choosing $x$. The total change depends on which of these two effects dominates.  To illustrate, we first establish by means of example that the IRA rule may violate \nameref{SQM}, and consequently so may the LRA and CRA rules. 

\begin{example}[SQM violation]
	\label{example: SQM_violation1} Suppose the DM's choices have an IRA representation with the preference $x\succ y\succ z$. If $1-\gamma_{y}(x) < 
\gamma_{z}(y)\left[1-\gamma_{z}(x)\right]$, then, $p_y(y,X) <  p_z(y,X)$.
\end{example}

 Note that since $\gamma_{z}(y) < 1$,  violations of \nameref{SQM} only arise when $\gamma_{y}(x) > \gamma_{z}(x)$ by a sufficient margin; $y$ substantially increases the chance of noticing $x$ (relative to $z$). This example suggests that \nameref{SQM} holds when attention is ``independent'' of the status quo. This is indeed the case when the DM admits an IRA (in fact, any CRA) representation, but it is not the case more generally.  
 
 To establish this, we first provide a formal definition of ``reference-independent'' attention for the IRA, LRA, and CRA models. The reference-independent versions of IRA and LRA are defined as in \autoref{attn:ira} and \autoref{attn:lra} respectively, where $\gamma$ and $\pi$ no longer depend on $r$. The reference-independent CRA model cannot simply be defined by \autoref{attn:cra} where $\pi'$ no longer depends on $r$, since then consideration set probabilities would not add up to one. Instead, we say that $\{\mu_r\}_{r\in X}$ has a reference-independent CRA representation if there exists a full support probability distribution $\pi$ on $\mathcal{X}$ such that for any $r\in D\subseteq S$, 
 \begin{equation*}
 \label{equation: ref-independent-CRA}
 \mu_r(D,S)=\sum_{D': D'\cap S=D}[\pi(D') + \pi(D'\setminus r)]
 \end{equation*}
 with the convention that $\pi(\emptyset) = 0$. Notice that for reference-independent CRA, if $x\neq r$, then 
 \[\sum_{D: x\in D}\mu_r(D,S)=\sum_{D: x\in D}\pi(D),\]
 which is independent from the reference alternative. That is, for any alternative that is distinct from the reference point, the consideration probability must be the same across different reference points and across different choice sets.\footnote{One can show that the reference-independent IRA is the intersection of the reference-independent LRA and CRA models. This parallels \autoref{proposition2} and ``verifies'' that our definition of reference-independent CRA is indeed the correct one.}

Now we can show that reference-independent CRA satisfies \nameref{SQM}, and consequently so does the reference-independent IRA. 
	
	\begin{proposition}
		\label{proposition: SQM - CRA}
		Suppose $\{p_r\}_{r \in X}$ admits a reference-independent CRA representation. Then $\{p_r\}_{r \in X}$ satisfies (strict) \nameref{SQM}.
		
	\end{proposition}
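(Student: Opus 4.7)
The plan is to reframe the reference-independent CRA formula as a simple two-stage sampling process, and then reduce the desired inequality to a probability comparison that differs by a single indicator event.

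First, I would verify that the expression
\[
\mu_r(D,S)=\sum_{D':\, D'\cap S = D}\bigl[\pi(D')+\pi(D'\setminus r)\bigr]
\]
coincides with the probability, over $D'\sim \pi$, that the random set $C_r := (D'\cup\{r\})\cap S$ equals $D$. Partitioning on whether $r\in D'$: if $r\in D'$ then $(D'\cup\{r\})\cap S=D'\cap S$, contributing the $\pi(D')$ term; if $r\notin D'$ then $(D'\cup\{r\})\cap S=(D'\cap S)\cup\{r\}$, which equals $D$ iff $D'\cap S=D\setminus r$, and the substitution $E=D'\setminus r$ converts the second term into exactly this contribution. So the DM's consideration set in problem $(S,r)$ can be taken to be $C_r=(D'\cup\{r\})\cap S$ with $D'\sim\pi$.

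Second, let $U_x:=\{y\in S:\, y\succ x\}$. Then $x$ is chosen from $(S,r)$ iff $x\in C_r$ and $C_r\cap U_x=\emptyset$. If $r\succ x$, then $r\in C_r\cap U_x$, so $p_r(x,S)=0$, while $p_x(x,S)\ge \pi(\{x\})>0$ by full support (taking $D'=\{x\}$), giving strict SQM trivially. If instead $x\succeq r$, then $r\notin U_x$, so the event $\{C_r\cap U_x=\emptyset\}$ reduces to $\{D'\cap U_x=\emptyset\}$. For $r=x$, $x\in C_r$ is automatic and thus $p_x(x,S)=\Pr[D'\cap U_x=\emptyset]$. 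For $x\succ r$ with $r\neq x$, the condition $x\in C_r$ is equivalent to $x\in D'$ (since $r\neq x$), so $p_r(x,S)=\Pr\bigl[x\in D',\,D'\cap U_x=\emptyset\bigr]$.

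Third, taking the difference,
\[
p_x(x,S)-p_r(x,S)=\Pr\bigl[x\notin D',\, D'\cap U_x=\emptyset\bigr]\ \ge\ \pi(\{r\}) \ >\ 0,
\]
where the first inequality uses that $D'=\{r\}$ satisfies both $x\notin D'$ (since $r\neq x$) and $D'\cap U_x=\emptyset$ (since $r\notin U_x$ as $x\succeq r$), and the strict positivity comes from the full-support assumption on $\pi$. This completes both the weak inequality and its strictness whenever $r\neq x$.

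The only subtle step is the first one: confirming that the given summation formula really corresponds to the sampling $C_r=(D'\cup\{r\})\cap S$, which is essentially a bookkeeping exercise partitioning on $\mathbf{1}\{r\in D'\}$. After that reinterpretation, SQM becomes transparent because making $x$ the reference only \emph{adds} $x$ to the consideration set (never removes an alternative), so it can only weakly increase the probability of the event ``$x$ available and unbeaten''; the strict inequality follows because the set $\{r\}$ itself witnesses a positive-probability draw where $x$ would have been absent under a different reference.
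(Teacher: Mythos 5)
Your proof is correct and, once the two-stage sampling language is unwound, it is essentially the paper's argument: both reduce $p_r(x,S)$ for $x\succ r$ to the reference-free quantity $\sum_{D:\,x=\arg\max(\succeq,D\cap S)}\pi(D)$ via the same pairing of $D$ with $D\setminus r$, and both obtain strictness from the extra mass contributed by sets omitting $x$ (your witness $\pi(\{r\})$ corresponds exactly to the term $\pi(D\setminus x)$ with $D=\{x,r\}$ in the paper's chain of inequalities).
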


On the other hand, the reference-independent LRA need not satisfy \nameref{SQM}, as we demonstrate in the next example. 

\begin{example}[LRA violates SQM]
	\label{example: SQM_violation2}
	Suppose the DM's choices have a Luce Random Attention representation with the preference $x\succ y\succ z$. If $\pi$ is reference-independent, then \[p_z(y,X) = \frac{\pi(\{y,z\})}{\pi(\{y,z\}) + \pi(\{z\}) + \pi(\{x,z\}) + \pi(\{x,y,z\})}, \]
	and
\[p_y(y,X) = \frac{\pi(\{y,z\}) + \pi(\{y\})}{\pi(\{y,z\}) + \pi(\{y\}) + \pi(\{x,y\}) + \pi(\{x,y,z\}) }. \]

When $\pi(\{x,y\}) + \pi(\{x,y,z\})$ is large relative to $ \pi(\{y\})$, capturing positively correlated consideration between $x$ and $y$, we may have that $p_y(y,X) < p_z(y,X)$, which violates SQM.
\end{example}

In the other direction, imposing \nameref{SQM} directly onto the class of RD-RAM models restricts changes in $\mu_{r}$ across various $r$. In particular, the ``shifts'' in attention have to be relatively small.  It is plausible that tighter results may be derived with additional assumptions, either on the domain or on the attention process. We leave progress in this direction to future work.


\subsection{RD-RAM and Heterogeneity of Status Quo Bias}

 In \citet*{masatlioglu2014canonical}'s influential model of status quo bias, the DM is endowed with a psychological constraint function $Q:X\rightarrow \mathcal{X}$ such that for any choice problem $(S,r)$ the DM's deterministic choice is
$$c(S,r)=\argmax(\succeq, Q(r)\cap S).$$

Note that each consideration set $D \subseteq S$ may be written as $D=Q_i(r)\cap S$, where $Q_i:X\rightarrow \mathcal{X}$ is some psychological constraint function. Hence any RD-RAM may be viewed as a model of varying intensity of status quo bias, either due to individual variation (e.g., multiple selves or changing tastes) or population heterogeneity. Note that the RD-RAM allows for the distribution of psychological constraints, or types, to vary with the menu. 

Now consider the data from a fixed distribution of types as might arise from a DM with fixed distributions of selves or a fixed population. In either case, there is fixed distribution of psychological constraints, $Q_i: X\rightarrow \mathcal{X}$, where each constraint function is realized with probability $\nu_i$. Then the DM's (or population's) choices will be consistent with the Constant Random Attention model that satisfies \[\pi_r(D) = \sum_{i}\nu_i \mathbbm{1}(Q_i(r)=D).\]

\subsection{Unobservable Reference Points}

Our assumption that the reference is observable is plausible in many economic contexts. For example, in many cases it is natural that a status quo, past choice, or an outside option may serve as the reference.  When the reference is unobservable, then the assumptions made about the reference determine what can be inferred about the decision maker. 

If the reference is unobserved but fixed, then we may identify the set of potential references as the set of things chosen with positive probability and infer that everything not chosen is dominated by the reference. In this case we have only partial identification of preferences, but the model still provides some predictive power.  If we impose additional structure on the attention rule we may learn even more. For instance, suppose the attention rule coincides with the IRA model \citep{manzini2014stochastic}. In this case, the unobserved reference takes the role of the default, and preferences may be identified among those alternatives that dominate the reference. 

Alternatively, one might observe choice data for some fixed reference points but not all (the case of partial observability). If the attention rule coincides with the IRA, LRA or CRA models, one can uniquely infer preferences of the DM among those alternatives that dominate the reference points. One then needs to impose an assumption on observed choices to ensure that the revealed preference relation is consistent across different reference points. If this assumption is satisfied, our characterization result can still be employed as a test of observable implications of the models. 

If the reference is both unobserved and variable, we can make little inference about the decision maker's preferences absent restrictions on the reference selection process. Formally, suppose that given a menu $S$, each alternative is a potential reference.  The analyst then observes data of the following form: 
\[p_{\eta}(x,S)= \sum_{r \in S}p_{r}(x,S)\eta(r,S),\]
for $\eta:X \times \mathcal{X} \ra [0,1]$ and $\sum_{r \in S}\eta(r, S)=1$.  Here $\eta(r, S)$ is the probability that $r$ is the reference when choosing from menu $S$. When we allow for RD-RAM without full-support and an arbitrary $\eta$, then the model lacks testable implications.\footnote{When we restrict attention to RD-RAM with full support, the behavioral restrictions are exceedingly mild. Indeed, even if we suppose that $p_{r}$ follows the IRA rule, which is the most restrictive model that we consider, the RD-RAM with random reference may approximate any stochastic choice rule by setting $\eta(x,S)=p(x,S)$ and supposing $\gamma_r(x)=\epsilon$ for $x \neq r$ where $\epsilon > 0$ is sufficiently small.}

\begin{theorem}\label{theorem: unobservable}
	Every stochastic choice rule is an RD-RAM (without full support) with random reference.\end{theorem}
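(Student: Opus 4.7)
The plan is to exhibit an explicit, essentially trivial, construction that works for any stochastic choice rule $p(x,S)$. The idea is to push all of the observed variation into the reference-selection distribution $\eta$, and let each $p_r$ be degenerate in a way that is consistent with the RD-RAM structure (without full support). Intuitively, we will make the reference always chosen with probability one, and then simply set $\eta(r,S)=p(r,S)$.

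More concretely, first I would set, for each menu $S$ and each $r\in S$,
\[
\eta(r,S) = p(r,S),
\]
which is a legitimate probability distribution over potential references in $S$ since $p(\cdot,S)$ is a probability distribution on $S$. Next, I would define the family $\{p_r\}_{r\in X}$ by
\[
p_r(x,S) = \begin{cases} 1 & \text{if } x = r, \\ 0 & \text{otherwise,}\end{cases}
\]
for every choice problem $(S,r)$ with $r\in S$.

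The key step is then to check that this family is an RD-RAM (without full support). Fix any linear order $\succeq$ on $X$ (the choice is immaterial), and define the attention rule
\[
\mu_r(D,S) = \begin{cases} 1 & \text{if } D = \{r\}, \\ 0 & \text{otherwise.}\end{cases}
\]
This satisfies $\mu_r(D,S)>0$ only if $r\in D\subseteq S$ and is a probability distribution over the relevant subsets of $S$. Since the unique consideration set realized under $\mu_r(\cdot,S)$ is $\{r\}$, maximizing $\succeq$ on $\{r\}$ yields $r$, which gives $p_r(r,S)=1$ and $p_r(x,S)=0$ for $x\neq r$, matching the definition above. Hence $(\succeq,\{\mu_r\}_{r\in X})$ is an RD-RAM representation of $\{p_r\}_{r\in X}$ in the sense of \autoref{definition:rdram} but without the full-support requirement on $\mu_r$.

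Finally, I would verify that this construction reproduces the given rule via the random-reference formula: for any $(x,S)$,
\[
p_\eta(x,S) = \sum_{r\in S} p_r(x,S)\,\eta(r,S) = p_x(x,S)\,\eta(x,S) = 1\cdot p(x,S) = p(x,S),
\]
where the second equality uses that $p_r(x,S)=0$ unless $r=x$. This completes the argument. There is essentially no obstacle here; the content of the theorem is the negative message that, once full support on $\mu_r$ is dropped and $\eta$ is unrestricted, the model imposes no behavioral restrictions whatsoever, and the proof amounts to observing that the degenerate ``always stick to the reference'' choice rule is a valid (non-full-support) RD-RAM.
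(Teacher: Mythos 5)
Your construction is correct and is essentially the paper's own argument: both set $\eta(r,S)=p(r,S)$ and choose a non-full-support attention rule under which the reference is always the $\succeq$-maximal element of the realized consideration set, so that $p_r(r,S)=1$ and the random-reference formula collapses to $p$. The only cosmetic difference is that you put all attention mass on $D=\{r\}$, whereas the paper spreads it arbitrarily over all $D$ containing no alternative preferred to $r$; either choice works.
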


The essential piece of this argument is that we allow for unrestricted $\eta$. By imposing restrictions on how $\eta$ changes across menus, testable predictions may be obtained. For example, \citet{kibris2021random} assume that $\eta(r,S)$ satisfies the regularity property and show that under this assumption the preferences of the decision maker can be revealed from observed choices to a large extent. However, they assume that choice is deterministic once the reference point is fixed, and hence random choice is entirely due to randomness in reference points. An extension of RD-RAM that allows for unobserved reference points, where the reference rule satisfies certain desirable properties, is an interesting direction for further research. 

\subsection{Reference-Dependent Random Utility Models}

It is known that in a setting without a reference, the Constant Random Attention model (hence also IRA) is a special case of the Random Utility Model. Here we illustrate that the CRA model is a special case of the \textbf{Reference-Dependent Random Utility Model} (RD-RUM), which we now define. Let $\mathcal{R}$ denote the set of all possible linear orders on $X$. We say that $\{p_r\}_{r\in X}$ has an RD-RUM representation if for each reference point $r\in X$ there exists a probability distribution $\pi'_r$ on $\mathcal{R}$ such that
$$p_r(x,S)=\sum_{\succeq\in \mathcal{R}:\: x=\arg\max(\succeq,S)}\pi'_r(\succeq).$$
To see why the CRA model is a special case of the RD-RUM, suppose $\{p_r\}_{r\in X}$ has a CRA representation with $(\succeq, \{\pi_r\}_{r\in X})$, where $\pi_r$ is a probability distribution on $\mathcal{X}$. We can construct an RD-RUM representation of $\{p_r\}_{r\in X}$ as follows. First, for any linear order $\succeq'$ on $X$ and $D\subseteq X$, let $\succeq'_{D}$ denote the restriction of $\succeq'$ to $D$. Define the function $m_{\succeq}:\mathcal{X}\rightarrow \mathcal{R}$ by
$$m_{\succeq}(D)=\succeq' \text{ where } \succeq'_{D}=\succeq_{D},\: \succeq'_{X\setminus D}=\succeq_{X\setminus D}, \text{ and } x\succeq' y \text{ for any } x\in D, y\in X\setminus D.$$
In other words, $m_{\succeq}(D)$ and $\succeq$ completely agree on $D$ and $X\setminus D$, but the former ranks every element in $D$ as superior to elements in $X\setminus D$. 
Next, for any $\succeq'\in \mathcal{R}$, let 
$$\pi'_r(\succeq')=\sum_{D:\: m_{\succeq}(D)=\succeq'}\pi_r(D).$$ 
It is easy to see that $\{\pi'_r\}_{r\in X}$ defined as such is an RD-RUM representation of $\{p_r\}_{r\in X}$.

While the RD-RAM clearly has a non-trivial intersection with the RD-RUM, neither class of models nests the other. This can be easily  seen by recalling that the LRA model, which is nested in the RD-RAM, violates regularity. In the other direction, an RD-RUM with full support over $\mathcal{R}$ would violate \nameref{NCC}. We believe that study of the RD-RUM is a fruitful direction for future work.

\subsection{Related Literature}\label{section: lit}

Our paper is the first to study reference-dependence in a stochastic choice setting (as far as we are aware). Consequently, it is closely related to several strands of the choice theory literature. 

\textbf{Choice With Status Quo:} There is a vast literature on the effects of a status quo and a complete survey will not be attempted. Within the decision theoretic literature, the approach of assuming observability of the status quo has been followed in numerous other papers.  Early and influential work in this area includes \citet{masatlioglu2005rational} and \citet{masatlioglu2014canonical}, the latter of which which generalizes the former.  Additional papers with an observable status quo include \citet{Dean2008}, \cite{masatlioglu2013choice}, \cite{guney2018costly}, \cite{Maltz2020category}, and \cite{masatlioglu2021decision}. While some of these papers incorporate limited attention in a reference-dependent choice framework, all of these papers address only deterministic choice behavior.

\textbf{Choice With Limited Attention:} This paper is also related to the literature on imperfect attention and limited consideration.  In the deterministic setting, imperfect attention has been studied by \cite{masatlioglu2012revealed}, \cite{lleras2017more} and \citet{dean2017limited}. The paper by \citet{dean2017limited} is particularly relevant, as they also incorporate status quo.  They show that a deterministic model of limited attention with a status quo is able to explain choice reversals between non-status quo alternatives. A key difference between their approach and ours is that in their paper the attention process is independent of the status quo, whereas in our paper the status quo directly affects consideration probabilities.  

In the random choice literature, \citet{manzini2014stochastic},  \cite{brady2016menu}, and \cite{aguiar2017random} are the most closely related and have been discussed at length already. \cite{horan2019random} extends the analysis in \citet{manzini2014stochastic} by assuming that the no-choice behavior of the DM is unobservable.
Recently, \cite{cattaneo2020random} study a general random attention model which also features an unstructured attention process. A few recent papers study the case where both preferences and consideration sets are stochastic (see \cite{abaluck2017consumers, barseghyan2019heterogeneous,  dardanoni2020inferring, aguiar2021random, gibbard2021disentangling}).  Other stochastic choice papers which feature limited or imperfect attention include \cite{kovach2020}, which considers a satisficer with a random threshold, and \cite{kovach2021flm}, which studies menu-dependent focality with taste shocks. 
We view all of these papers as complementary, as we incorporate a reference and focus on reference-directed attention.

\textbf{Endogenous Reference Points:} Our paper is also related to the strand of literature that attempts to endogenize the reference point formation (\cite{koszegi2006model, bordalo2013salience, ok2014revealed, kibris2018theory, tserenjigmid2015choosing, lim2020, kibris2021random}). These papers mainly focus on situations where a DM's choices might be affected by a reference point that is not observed by the analyst. They are complementary to our study as we assume an observable status quo. An extension of our model that accommodates unobserved reference points is a fruitful direction for further research. 

\pagebreak
\appendix

\section{Proofs of Main Results}\label{section: proofs}

\subsection{Proof of \autoref{proposition2}}

First, suppose that $\mu_r$ has an IRA representation, i.e., \autoref{attn:ira} is satisfied for some $\gamma_r:X\rightarrow [0,1]$ such that $\gamma_r(r)=1$ and $\gamma_r(x)\in (0,1)$. Let $$\pi_r(D)=\prod_{x\in D}\gamma_r(x)\prod_{y\in X\setminus D}(1-\gamma_r(y)),$$ as in \cite{brady2016menu}. Then, if $r\in D\subseteq S$,
\begin{align*}
\frac{\pi_r(D)}{\sum_{D':\: r \in D'\subseteq S}\pi_r(D')}&=\frac{\prod_{x\in D}\gamma_r(x)\prod_{y\in X\setminus D}(1-\gamma_r(y))}{\sum_{D':\: r \in D'\subseteq S}\prod_{x\in D'}\gamma_r(x)\prod_{y\in X\setminus D'}(1-\gamma_r(y))}\\
&=\frac{\prod_{x\in D}\gamma_r(x)\prod_{y\in S\setminus D}(1-\gamma_r(y))}{\sum_{D':\: r \in D'\subseteq S}\prod_{x\in D'}\gamma_r(x)\prod_{y\in S\setminus D'}(1-\gamma_r(y))}\\
&=\prod_{x\in D}\gamma_r(x)\prod_{y\in S\setminus D}(1-\gamma_r(y))\\
&=\mu_r(D,S),
\end{align*}
where the first equality is by definition, the second equality follows from simple algebra, the third equality follows from the observation that the denominator is equal to 1, and the last equality follows from the hypothesis that $\mu_r$ has an IRA representation with this choice of $\gamma_r$. Hence, Independent Random Attention is a special case of Luce Random Attention. In addition,
\begin{align*}
\sum_{D': D'\cap S=D}\pi_r(D')&=\sum_{D': D'\cap S=D}\prod_{x\in D'}\gamma_r(x)\prod_{y\in X\setminus D'}(1-\gamma_r(y))\\
&=\prod_{x\in D}\gamma_r(x)\prod_{y\in S\setminus D}(1-\gamma_r(y))\Big[\sum_{D':D'\cap S=D}\prod_{x\in D'\setminus D}\gamma_r(x)\prod_{y\in (X\setminus S)\setminus D'}(1-\gamma_r(y))\Big]\\
&=\prod_{x\in D}\gamma_r(x)\prod_{y\in S\setminus D}(1-\gamma_r(y))\\
&=\mu_r(D,S),
\end{align*}
where the third equality follows from the observation that the term inside the big parentheses is equal to 1. Hence, Independent Random Attention is also a special case of Constant Random Attention. 
	
Now suppose $\mu_r$ has both LRA and CRA representations. That is, there exist $\pi_r$ satisfying \autoref{attn:lra} and $\pi'_r$ satisfying \autoref{attn:cra}. Let $\gamma_r(x)=\mu_r(\{r,x\},\{r,x\})$. Note that $\gamma_r(r)=1$ holds trivially and $\gamma_r(x)>0$ due to the full support assumption. It needs to be shown that $\gamma_r$ satisfies \autoref{attn:ira}. This is trivially satisfied when $S$ is binary. Suppose the claim holds whenever $S$ has a cardinality less than $k$, and let $S$ with $|S|=k>2$ be given. First, let $D\subsetneq S$ and pick $x\in S\setminus D$. Since $\mu_r$ has an LRA representation, it satisfies the IIA property for consideration sets. Hence, for any $D,D'\subseteq S\setminus x$ such that $r\in D\cap D'$, we have 
$$\frac{\mu_r(D,S)}{\mu_r(D,S\setminus x)}=\frac{\mu_r(D',S)}{\mu_r(D',S\setminus x)},$$
which implies that the above ratio only depends on $S$ and $x$. Let $\kappa(S,x)$ denote the above ratio so that 
$$\mu_r(D,S)=\mu_r(D,S\setminus x) \kappa(S,x).$$
for all $D\subseteq S\setminus x$. Adding over all $D\subseteq S\setminus x$, we get
$$\sum_{D:r\in D\subseteq S\setminus x} \mu_r(D,S)=\kappa(S,x) \Big(\sum_{D:r\in D\subseteq S\setminus x}\mu_r(D,S\setminus x)\Big)=\kappa(S,x),$$
where the second equality follows from $\sum_{D:r\in D\subseteq S\setminus x}\mu_r(D,S\setminus x)=1$. Now, plugging $\kappa(S,x)$ back in the above formula, we get
\begin{align*}
\mu_r(D,S)=\mu_r(D,S\setminus x)\sum_{D':\: r \in D'\subseteq S\setminus x}\mu_r(D',S).
\end{align*}
By induction hypothesis,
\begin{align*}
\mu_r(D,S\setminus x)=\prod_{y\in D}\gamma_r(y)\prod_{z\in S\setminus (D\cup x)}(1-\gamma_r(z)).
\end{align*}
Furthermore, since $\mu_r$ has a CRA representation, the probability that $x$ is not considered in $S$ is equal to the probability that $x$ is not considered in $\{r,x\}$. Hence, we have
\begin{align*}
\sum_{D':\: r \in D'\subseteq S\setminus x}\mu_r(D',S)=\mu_r(\{r\},\{r,x\})=(1-\gamma_r(x)),
\end{align*}
where the last equation holds by definition. Combining this equality with the previous ones, it follows that
\begin{align*}
\mu_r(D,S)=\prod_{y\in D}\gamma_r(y)\prod_{z\in S\setminus D}(1-\gamma_r(z))
\end{align*}
whenever $D\subsetneq S$. By using the fact that $\mu_r(S,S)=1-\sum_{D:\: r \in D\subsetneq S}\mu_r(D,S)$, the claim follows for all $D\subseteq S$. This concludes the proof of the proposition.

\subsection{Proof of \autoref{theorem:RDRAM}} 
Necessity is clear from the discussion in the main text. We prove sufficiency. For any $x,y\in X$, let $x\succeq y$ if $p_y(x,\{x,y\})>0$. By letting $S=T=\{x,y\}$ in the \nameref{SQA} axiom, we can see that for any $x,y\in X$, either $p_y(x,\{x,y\})>0$ or $p_x(y,\{x,y\})>0$ or both. Hence, $\succeq$ is complete. In addition, if $p_y(x,\{x,y\})>0$, then  \nameref{NCC} implies that $p_x(y,\{x,y\})=0$. Therefore, $\succeq$ is antisymmetric. To see that $\succeq$ is transitive, suppose $x\succeq y\succeq z$. If $x=y$ or $y=z$, then transitivity is obvious. Hence, suppose $x\succ y\succ z$. Then $p_y(x,\{x,y\})>0$ and $p_z(y,\{y,z\})>0$. \nameref{NCC} implies that $p_x(z,\{x,z\})=0$. By \nameref{SQA}, $p_z(x,\{x,z\})>0$, and hence $x\succ z$. 

Now for any $x$, let $D_x$ be the largest set in which $x$ is the $\succeq$-maximal element. For any $r\in D\subseteq S$, we have that $\{r,x\}\subseteq D\subseteq D_x\cap S$ for some $x\in S$. Next, note that either $x=r$ or $x\succ r$. If $x=r$, then $p_r(x,S)>0$ by \nameref{NRE}. On the other hand, if $x\neq r$, then $p_r(x,S)>0$ by \nameref{SQA} and \nameref{NCC}. Hence, we can choose $\mu_r(D,S)>0$ such that
$$p_r(x,S)=\sum_{D:\: \{r,x\}\subseteq D\subseteq D_x\cap S}\mu_r(D,S).$$
To illustrate, we can let $n_r(x,S) = |\{D\in \mathcal{X}| \: \{r,x\}\subseteq D\subseteq D_x\cap S\}|$ and $\mu_r(D,S)=p_r(x,S)/n_r(x,S)$ whenever $\{r,x\}\subseteq D\subseteq D_x\cap S$. It is easy to see that this results in the desired representation. The first paragraph also implies that $\succeq$ is unique. 

\subsection{Proof of \autoref{theorem:IRA}}

Necessity should be obvious from the discussion in the main text. Here we prove sufficiency. Let $x\succeq y$ whenever $p_y(x,\{x,y\})>0$. Since $\{p_r\}_{r\in X}$ is an RD-RAM, $\succeq$ is a linear order on $X$. Recall that $x$ is the dominant alternative in $S$ if and only if $p_y(x,S)>0$ for any $y\in S$. Hence, by our definition of $\succeq$, $x$ is the dominant alternative in $S$ if and only if $x=\max(\succeq, S)$. 

\begin{lemma}
	If $\{p_r\}_{r\in X}$ satisfies \nameref{IDA} and \nameref{RIDA}, then for any $(S,r)$ and $y\in S\setminus x$ where $x$ is the dominant alternative in $S$,
	$$p_r(y,S)=p_r(y,S\setminus x)p_r(r,\{x,r\}).$$
\end{lemma}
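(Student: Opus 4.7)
My plan is to follow the two-step strategy indicated in the paragraph after the statement: first use RIDA to show that $p_r(y,S)/p_r(y,S\setminus x)$ is a single constant $c$, independent of $y\in S\setminus x$; then use a summation argument together with iterated IDA to identify $c$ with $p_r(r,\{x,r\})$. Throughout I will invoke \autoref{theorem:RDRAM} to treat $\succeq$ as a fixed linear order on $X$, and use full support to convert positivity of $p_r(y,M)$ into a statement about how $y$ and $r$ rank under $\succeq$.

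\textbf{Step 1 (constant ratio).} Fix $y,z\in S\setminus x$ with $p_r(y,S)\,p_r(z,S)>0$. Since $x$ is dominant in $S$, RIDA gives $p_r(y,S)/p_r(z,S) = p_r(y,S\setminus x)/p_r(z,S\setminus x)$, equivalently $p_r(y,S)/p_r(y,S\setminus x) = p_r(z,S)/p_r(z,S\setminus x)$. Call this common value $c$. For the remaining $y\in S\setminus x$ with $p_r(y,S)=0$, the RD-RAM representation together with full support forces $p_r(y,M)>0$ if and only if $y=r$ or $y\succ r$, and this condition is independent of the ambient menu $M$; thus $p_r(y,S\setminus x)=0$ as well, and $p_r(y,S)=c\,p_r(y,S\setminus x)$ holds trivially. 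Hence $p_r(y,S)=c\,p_r(y,S\setminus x)$ for every $y\in S\setminus x$.

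\textbf{Step 2 (evaluate $c$).} Summing the Step 1 identity over $y\in S\setminus x$, and using $\sum_{y\in S\setminus x} p_r(y,S\setminus x)=1$ together with $\sum_{y\in S\setminus x} p_r(y,S)=1-p_r(x,S)$, gives $c=1-p_r(x,S)$. Next, I will evaluate $p_r(x,S)$ by iterating IDA: every $z\in S\setminus\{x,r\}$ satisfies $x\succ z$, so by full support $p_z(x,\{x,z\})>0$, and IDA then yields $p_r(x,S)=p_r(x,S\setminus z)$. Since $x$ remains the $\succeq$-maximum in any menu containing it, I can recursively remove all of $S\setminus\{x,r\}$ one alternative at a time, obtaining $p_r(x,S)=p_r(x,\{x,r\})=1-p_r(r,\{x,r\})$. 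Combining with $c=1-p_r(x,S)$ gives $c=p_r(r,\{x,r\})$, and substituting back into Step 1 yields the lemma.

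\textbf{Expected obstacle.} Step 1 is a short algebraic consequence of RIDA; the only subtlety is the boundary case $p_r(y,S)=0$, where the explicit revealed-preference characterization of positivity (from \autoref{theorem:RDRAM} plus full support) is needed to match it with $p_r(y,S\setminus x)=0$. The main substantive work is in Step 2, specifically in justifying the iterated application of IDA. I must verify, as the recursion proceeds, that (i) the removed alternative continues to be dominated by $x$ in the required sense (i.e., $p_z(x,T)>0$ for some $T$), and (ii) $x$ remains the dominant alternative in the shrinking menu. Both facts follow because dominance is determined by the fixed preference $\succeq$, but they must be invoked explicitly to make the recursion rigorous.
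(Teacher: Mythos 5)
Your proof is correct and follows essentially the same route as the paper's: RIDA yields a menu-level constant ratio, summation over $S\setminus x$ identifies it with $1-p_r(x,S)$, and iterated IDA collapses $p_r(x,S)$ to $p_r(x,\{x,r\})=1-p_r(r,\{x,r\})$. Your additional care with the zero-probability case and the justification of the IDA recursion only makes explicit what the paper leaves implicit.
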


\begin{proof}
	RIDA implies that there exists a constant $\kappa(S)$ such that $p_r(y,S)=\kappa(S)p_r(y,S\setminus x)$ for all $y\in S\setminus x$. By adding over all $y\in S\setminus x$, we get that $\sum_{y\in S\setminus x}p_r(y,S)=\kappa(S)$, or alternatively, $1-p_r(x,S)=\kappa(S)$. Since $x$ is the dominant alternative in $S$, IDA implies that $1-p_r(x,S)=1-p_r(x,\{r,x\})=p_r(r,\{x,r\})$. Hence, \[p_r(y,S)=p_r(y,S\setminus x)\kappa(S)=p_r(y,S\setminus x)(1-p_r(x,S))= p_r(y,S\setminus x)p_r(r,\{x,r\}),\]
	as desired.
\end{proof}

The next step is to define $\gamma_r(x)$ for any $r,x\in X$. Let 
$$\gamma_r(x)=\max\{p_r(x,\{r,x\}),p_x(r,\{r,x\})\}.$$
If $r=x$, then we have $\gamma_r(r)=1$. Moreover, since $\{p_r\}_{r\in X}$ is an RD-RAM, if $r\neq x$, then one of the terms inside the curly brackets is zero and the other one is strictly between zero and one due to \nameref{NCC} and \nameref{SQA}, so that $\gamma_r(x)\in (0,1)$. For any $(S,r)$, the consideration set probabilities are defined by 
$$\mu_r(D,S)=\prod_{x\in D}\gamma_r(x)\prod_{y\in S\setminus D}(1-\gamma_r(y)).$$

\begin{lemma}
	For any $(S,r)$, $x\in S$, and $D\subseteq S\setminus x$, 
	$$\mu_r(D,S)=\mu_r(D,S\setminus x)\mu_r(\{r\},\{r,x\}).$$
\end{lemma}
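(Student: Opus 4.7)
The plan is to verify the claim by a direct expansion of $\mu_r$ using its defining product formula, since the lemma is really a structural identity about products rather than a deep consequence of the axioms. The axiomatic work was already done in the previous lemma; here we just need to see that the multiplicative form of $\mu_r$ makes the reference alternative $r$ behave like a ``free'' factor (because $\gamma_r(r)=1$), and that pulling out a single element $x \notin D$ from $S$ simply peels off a factor of $(1-\gamma_r(x))$.

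Concretely, I would first write
\[
\mu_r(D,S) \;=\; \prod_{y\in D}\gamma_r(y)\prod_{z\in S\setminus D}(1-\gamma_r(z)).
\]
Since $D\subseteq S\setminus x$, we have $x\in S\setminus D$, so I can split the second product into $(1-\gamma_r(x))$ times the same product taken over $z\in (S\setminus x)\setminus D$. This gives
\[
\mu_r(D,S) \;=\; \Bigl[\prod_{y\in D}\gamma_r(y)\prod_{z\in (S\setminus x)\setminus D}(1-\gamma_r(z))\Bigr]\,(1-\gamma_r(x))
\;=\; \mu_r(D,S\setminus x)\,(1-\gamma_r(x)),
\]
where the bracketed expression is exactly $\mu_r(D,S\setminus x)$ by definition.

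Finally I would identify the factor $(1-\gamma_r(x))$ with $\mu_r(\{r\},\{r,x\})$. Using the definition with $D=\{r\}$ and $S=\{r,x\}$, and the fact that $\gamma_r(r)=1$ (which was established when $\gamma_r$ was defined), one gets
\[
\mu_r(\{r\},\{r,x\}) \;=\; \gamma_r(r)\,(1-\gamma_r(x)) \;=\; 1-\gamma_r(x).
\]
Combining this with the previous display yields the claim. There is no substantive obstacle here; the only thing to be careful about is the bookkeeping of the index sets when $x$ is pulled out of $S\setminus D$, and the observation that $\gamma_r(r)=1$ makes the factor at $r$ disappear in $\mu_r(\{r\},\{r,x\})$.
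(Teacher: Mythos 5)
Your proposal is correct and follows essentially the same route as the paper: both expand $\mu_r(D,S)$ via the defining product, peel off the factor $(1-\gamma_r(x))$ using $x\in S\setminus D$, and identify that factor with $\mu_r(\{r\},\{r,x\})$ via $\gamma_r(r)=1$. No issues.
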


\begin{proof}
	This claim follows from simple algebra and definitions:
	\begin{align*}
	\mu_r(D,S) &= \prod_{y\in D}\gamma_r(y)\prod_{z\in S\setminus D}(1-\gamma_r(z))=\prod_{y\in D}\gamma_r(y)\prod_{z\in S\setminus (D\cup x)}(1-\gamma_r(z))(1-\gamma_r(x))\\
	&= \mu_r(D,S\setminus x)\mu_r(\{r\},\{r,x\}).
	\end{align*}
\end{proof}

\begin{lemma}
	For any $(S,r)$ and $x\in S$,
	\[p_r(x,S)=\sum_{D\subseteq S:\: x = \arg\max(\succeq, D)}\mu_r(D,S).\]
\end{lemma}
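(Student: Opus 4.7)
I would proceed by strong induction on $|S|$. The base case $|S|=1$ is immediate, since $p_r(r,\{r\})=1=\gamma_r(r)=\mu_r(\{r\},\{r\})$. For the inductive step, let $|S|\ge 2$ and let $x^{\ast}=\arg\max(\succeq,S)$, which by the discussion following \autoref{theorem:RDRAM} is the unique dominant alternative in $S$. The identity that powers the whole argument is
\[\mu_r(\{r\},\{r,x^{\ast}\}) \;=\; p_r(r,\{r,x^{\ast}\}),\]
which holds trivially when $x^{\ast}=r$ (both sides are $1$) and otherwise follows from $x^{\ast}\succ r$, which forces $\gamma_r(x^{\ast})=p_r(x^{\ast},\{r,x^{\ast}\})$ so that $\mu_r(\{r\},\{r,x^{\ast}\})=1-\gamma_r(x^{\ast})=p_r(r,\{r,x^{\ast}\})$.

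In the degenerate subcase $x^{\ast}=r$, dominance gives $p_r(r,S)=1$, and every $D$ with $r\in D\subseteq S$ has $r=\arg\max(\succeq,D)$, so the right-hand side equals $\sum_{D :\, r\in D\subseteq S}\mu_r(D,S)=1$. Assume henceforth $x^{\ast}\neq r$, and split on whether $x$ equals the dominant element. If $x=x^{\ast}$, the second supporting lemma gives
\[\sum_{D :\: r\in D\subseteq S\setminus x^{\ast}}\mu_r(D,S) \;=\; \mu_r(\{r\},\{r,x^{\ast}\})\sum_{D :\: r\in D\subseteq S\setminus x^{\ast}}\mu_r(D,S\setminus x^{\ast}) \;=\; \mu_r(\{r\},\{r,x^{\ast}\}),\]
so the complementary sum---which is exactly the sum over $D$ with $x^{\ast}=\arg\max(\succeq,D)$, since $x^{\ast}$ is the $\succeq$-maximum of $S$---equals $1-\mu_r(\{r\},\{r,x^{\ast}\})$. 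Summing the first supporting lemma over $y\in S\setminus x^{\ast}$ yields $1-p_r(x^{\ast},S)=p_r(r,\{r,x^{\ast}\})$, and the key identity closes this case.

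If $x\neq x^{\ast}$ and $x\prec r$, both sides vanish: the RD-RAM representation forces $p_r(x,S)=0$, and no $D\ni r$ can have $x$ as its $\succeq$-maximum. Otherwise $x,r\in S\setminus x^{\ast}$ with $x\succeq r$, and since $x^{\ast}\succ x$, every $D\subseteq S$ attaining $x=\arg\max(\succeq,D)$ already lies in $S\setminus x^{\ast}$. The second supporting lemma factors out $\mu_r(\{r\},\{r,x^{\ast}\})$ and the inductive hypothesis on $S\setminus x^{\ast}$ collapses the remaining sum to $p_r(x,S\setminus x^{\ast})$, so the right-hand side equals $\mu_r(\{r\},\{r,x^{\ast}\})\,p_r(x,S\setminus x^{\ast})$. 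By the first supporting lemma and the key identity, this equals $p_r(r,\{r,x^{\ast}\})\,p_r(x,S\setminus x^{\ast})=p_r(x,S)$.

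The main obstacle is really the bookkeeping around the key identity. The asymmetric definition $\gamma_r(x)=\max\{p_r(x,\{r,x\}),p_x(r,\{r,x\})\}$ is exactly what lets the model-side factor $\mu_r(\{r\},\{r,x^{\ast}\})$ match the observable factor $p_r(r,\{r,x^{\ast}\})$ produced by summing the first supporting lemma. Without this alignment the two expressions in Case $x=x^{\ast}$ would not cancel, and the multiplicative factor appearing in the induction step for $x\neq x^{\ast}$ would be wrong; once it is in place, the induction is structurally forced by partitioning consideration sets according to whether they contain $x^{\ast}$.
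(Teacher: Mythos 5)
Your proof is correct and follows essentially the same route as the paper's: induction on $|S|$, peeling off the dominant alternative $x^{\ast}$ with the two supporting lemmas, and using the identity $\mu_r(\{r\},\{r,x^{\ast}\})=p_r(r,\{r,x^{\ast}\})$ to match the model-side and data-side factors. The only cosmetic difference is that you verify the case $x=x^{\ast}$ directly on the model side, whereas the paper obtains it by complementarity from the non-dominant alternatives.
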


\begin{proof}
	The proof is by induction. First, suppose $S=\{x,r\}$. If $p_r(x,S)=0$, then $r\succ x$  and we are done. If $p_r(x,S)>0$, then $x\succ r$ and 
	$$p_r(x,\{x,r\}) = \max\{p_r(x,\{x,r\}),p_x(r,\{x,r\})\}  = \gamma_r(x)=\mu_r(\{x,r\},\{x,r\}),$$ as desired.
	
	Now suppose the claim holds for all choice sets with cardinality less than $k$. Let $|S|=k$ and $r\in S$ be given. Let $y = \max(\succeq, S)$. If $y=r$, then the claim is trivial, since we have $p_r(r,S)=1$ and $p_r(x,S)=0$ for all $x\in S\setminus r$. Hence, suppose $y\neq r$ and let $x\neq y$ be given. Then,
	\begin{align*}
	p_r(x,S)&=p_r(x,S\setminus y)p_r(r,\{r,y\})\\
	&=\Big(\sum_{D\subseteq S\setminus y:\: x = \arg\max(\succeq, D)}\mu_r(D,S\setminus y)\Big)\mu_r(r,\{r,y\})\\
	&=\sum_{D\subseteq S\setminus y:\: x = \arg\max(\succeq, D)}\mu_r(D,S)\\
	&=\sum_{D\subseteq S:\: x = \arg \max(\succeq, D)}\mu_r(D,S),
	\end{align*}
	where the first equality follows from the first lemma, the second equality follows from induction, the third equality follows from the previous lemma, and the last equality follows from the fact that $y\succ x$. This shows that the representation holds for all $x\in S\setminus y$. Finally, since $p_r(y,S)=1-\sum_{x \in S\setminus y}p_r(x,S)$, the representation also holds for $y$. 
\end{proof}

\subsection{Proof of \autoref{theorem:logit}}

\subsubsection{Necessity}

The necessity of \nameref{RIDA} is obvious from the main text. Here we show the necessity of \nameref{DORA}.

\begin{lemma}
	If $\{p_r\}_{r\in X}$ has a Luce Random Attention representation $(\succeq, \{\pi_r\}_{r\in X})$, then it satisfies \nameref{DORA}.
\end{lemma}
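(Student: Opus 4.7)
The plan is to write $p_r(r,S)$ explicitly using the LRA representation, reduce DORA to a statement about a numerator that is invariant under the denominator, and finish by inclusion–exclusion. Since the RD-RAM preference $\succeq$ is uniquely revealed from choices (by \autoref{theorem:RDRAM}), we have $P_r=\{s\in X:s\succ r\}$, and $r=\max(\succeq,D)$ for $D\ni r$ if and only if $D\cap P_r=\emptyset$. Writing $W_r(A)=\sum_{D:\,r\in D\subseteq A}\pi_r(D)$, the LRA formula gives
\[
p_r(r,S)=\frac{W_r(S\setminus P_r)}{W_r(S)},\qquad
\mathcal{O}^r_S=\frac{W_r(S)-W_r(S\setminus P_r)}{W_r(S\setminus P_r)}=\frac{1}{W_r(S\setminus P_r)}\!\!\!\sum_{\substack{D:\,r\in D\subseteq S\\D\cap P_r\neq\emptyset}}\!\!\!\pi_r(D).
\]

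The key observation is that each $U_i\subseteq P_r$, so removing any combination of the $U_i$'s from $S$ does not alter $S\setminus P_r$. Hence the denominator $W_r(S\setminus P_r)$ is fixed under every operator $\Delta_{U_i}$, and if we let $f(S)$ denote the numerator above, then $\Delta_{\mathcal{U}}\mathcal{O}^r_S=\Delta_{\mathcal{U}}f(S)/W_r(S\setminus P_r)$. Next I verify that the $\Delta_{U_i}$'s commute as set-indexed difference operators, so by a straightforward induction on $n$,
\[
\Delta_{\mathcal{U}}f(S)=\sum_{I\subseteq\{1,\ldots,n\}}(-1)^{|I|}\,f\!\Bigl(S\setminus\textstyle\bigcup_{i\in I}U_i\Bigr).
\]

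Substituting the expression for $f$ and swapping the order of summation, each admissible $D$ contributes $\pi_r(D)$ times $\sum_{I:\,D\cap U_i=\emptyset\text{ for all }i\in I}(-1)^{|I|}$; standard inclusion–exclusion collapses this coefficient to $1$ when $D\cap U_i\neq\emptyset$ for every $i$ and to $0$ otherwise. Therefore
\[
\Delta_{\mathcal{U}}f(S)=\sum_{\substack{D:\,r\in D\subseteq S\\ D\cap U_i\neq\emptyset\,\forall i}}\pi_r(D).
\]
This sum is strictly positive: the hypothesis $S\cap U_i\neq\emptyset$ ensures that $D^{*}=\{r\}\cup\bigcup_{i=1}^{n}(S\cap U_i)$ satisfies $r\in D^{*}\subseteq S$ and $D^{*}\cap U_i\neq\emptyset$ for every $i$, and $\pi_r(D^{*})>0$ by full support of $\pi_r$. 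Dividing by the positive quantity $W_r(S\setminus P_r)$ yields $\Delta_{\mathcal{U}}\mathcal{O}^r_S>0$. The only non-routine step is the inclusion–exclusion bookkeeping, which becomes transparent once the initial decomposition of $p_r(r,S)$ is in hand.
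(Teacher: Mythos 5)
Your proof is correct and follows essentially the same route as the paper's: both express the odds as a ratio whose denominator $\sum_{D:\,r\in D\subseteq S\setminus P_r}\pi_r(D)$ is unchanged by deleting subsets of $P_r$, reduce the iterated difference to $\sum_{D:\,r\in D\subseteq S,\ D\cap U_i\neq\emptyset\ \forall i}\pi_r(D)$, and invoke full support of $\pi_r$ for strict positivity (the paper uses $D=S$ as the witness where you use $D^{*}=\{r\}\cup\bigcup_i(S\cap U_i)$; both work). The only difference is that you spell out the inclusion--exclusion bookkeeping that the paper compresses into a ``similarly.''
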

\begin{proof}
	Let $(S,r)$ be given. Recall that $P_r$ is the set of all alternatives which dominate $r$ in a binary comparison (i.e., $s\in P_r \Leftrightarrow  p_r(s,\{r,s\})>0$ and $s\neq r$). From the representation, 
	$$\mathcal{O}^r_S=\frac{1-p_r(r,S)}{p_r(r,S)}=\frac{\sum_{D:\: r\in D\subseteq S}\pi_r(D)}{\sum_{D:\: r\in D\subseteq S\setminus P_r}\pi_r(D)}-1.$$
	Hence,  for any $U\subseteq P_r$ such that $U\cap S\neq \emptyset$,
	\begin{align*}
	\Delta_{U}\mathcal{O}^r_S&=\frac{1-p_r(r,S)}{p_r(r,S)}-\frac{1-p_r(r,S\setminus U)}{p_r(r,S\setminus U)}=\frac{\sum_{D:\: r\in D\subseteq S}\pi_r(D)-\sum_{D:\: r\in D\subseteq S\setminus U}\pi_r(D)}{\sum_{D:\: r\in D\subseteq S\setminus P_r}\pi_r(D)}\\&=\frac{\sum_{D:\: r\in D\subseteq S \text{ and } D\cap U\neq \emptyset}\pi_r(D)}{\sum_{D:\: r\in D\subseteq S\setminus P_r}\pi_r(D)},
	\end{align*}
	where we use $S\setminus P_r  = (S\setminus U)\setminus P_r$. Similarly, for $\mathcal{U}=\{U_1,\dots, U_n\}$ such that $U_i\subseteq P_r$ and $S\cap U_i\neq \emptyset$ for all $i\in \{1,\dots, n\}$,
	\begin{align*}
	\Delta_{\mathcal{U}}\mathcal{O}^r_S&=\Delta_{\mathcal{U}\setminus U_1}\mathcal{O}^r_S-\Delta_{\mathcal{U}\setminus U_1}\mathcal{O}^r_{S\setminus U_1}=\frac{\sum_{D:\: r\in D\subseteq S \text{ and for all } i, \: D\cap U_i\neq \emptyset}\pi_r(D)}{\sum_{D:\: r\in D\subseteq S\setminus P_r}\pi_r(D)}\\
	&\geq \frac{\pi_r(S)}{\sum_{D:\: r\in D\subseteq S\setminus P_r}\pi_r(D)}>0,
	\end{align*}
	as desired.
\end{proof}

\subsubsection{Sufficiency}
First, define $x\succeq y$ if $p_y(x,\{x,y\})>0$. As before, $\succeq$ is a linear order on $X$. We next construct $\{\pi_r\}_{r\in X}$ from $\{p_r\}_{r\in X}$. The following theorem will be useful for this purpose.

\begin{theorem}[M\"obius Inversion \citep*{shafer1976mathematical}]
	\label{mobius}
	If $\Theta$ is a finite set and $f$ and $g$ are functions on $2^{\Theta}$, then
	$$f(A)=\sum_{B\subseteq A}g(B)$$
	for all $A\subseteq \Theta$ if and only if 
	$$g(A)=\sum_{B\subseteq A}(-1)^{|A\setminus B|}f(B)$$
	for all $A\subseteq \Theta$. 
\end{theorem}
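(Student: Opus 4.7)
The plan is to prove both implications by direct substitution, reducing each to a standard combinatorial identity on the Boolean lattice $2^{\Theta}$. Since the two directions are structurally symmetric, I would write out the forward direction carefully and then note that the converse follows by the same bookkeeping.

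For the forward direction, assume $f(A)=\sum_{B\subseteq A}g(B)$ for every $A\subseteq\Theta$. Substituting this into the right-hand side of the proposed inversion formula and interchanging the order of summation yields
\begin{align*}
\sum_{B\subseteq A}(-1)^{|A\setminus B|}f(B)
 &= \sum_{B\subseteq A}(-1)^{|A\setminus B|}\sum_{C\subseteq B}g(C) \\
 &= \sum_{C\subseteq A}g(C)\sum_{B:\,C\subseteq B\subseteq A}(-1)^{|A\setminus B|}.
\end{align*}
The whole argument then hinges on evaluating the inner sum.

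The key combinatorial lemma is the M\"obius kernel identity on the Boolean lattice. Fix $C\subseteq A$ and set $n=|A\setminus C|$. Each $B$ with $C\subseteq B\subseteq A$ is uniquely determined by the subset $A\setminus B\subseteq A\setminus C$, whose cardinality $k$ ranges from $0$ to $n$, and there are $\binom{n}{k}$ such subsets. Hence
\[
\sum_{B:\,C\subseteq B\subseteq A}(-1)^{|A\setminus B|}
 = \sum_{k=0}^{n}\binom{n}{k}(-1)^{k}
 = (1-1)^{n},
\]
which equals $1$ when $C=A$ (so $n=0$) and $0$ otherwise. Plugging this back in, every term except $C=A$ vanishes and the double sum collapses to $g(A)$, as desired.

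The converse is entirely analogous: assume the inversion formula and substitute it into $\sum_{B\subseteq A}g(B)$. After swapping the order of summation, the inner sum becomes $\sum_{B:\,C\subseteq B\subseteq A}(-1)^{|B\setminus C|}$, which by the same binomial computation is again $\delta_{C,A}$, collapsing the expression to $f(A)$. The only technical obstacle is keeping the index sets and signs straight, since in the converse the exponent $|A\setminus B|$ is replaced by $|B\setminus C|$; however, both reduce to $(1-1)^{n}$ with $n=|A\setminus C|$, so once the M\"obius kernel identity is established the two inversions follow mechanically.
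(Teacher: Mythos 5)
Your proof is correct. The paper does not prove this statement --- it is imported directly from \citet{shafer1976mathematical} as a known tool --- so there is no in-paper argument to compare against; your derivation (substitute, interchange the order of summation, and collapse the inner alternating sum over the interval $\{B : C\subseteq B\subseteq A\}$ to $(1-1)^{|A\setminus C|}=\delta_{C,A}$) is the standard one, and your handling of the sign bookkeeping in the converse direction is accurate.
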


Let $r\in X$ be given, and denote by $\mathcal{D}_r$ the collection of subsets of $X$ which have $r$ as the dominant alternative. To construct $\{\pi_r\}_{r\in X}$, we first construct weights $\{\lambda_r\}_{r\in X}$ such that the former is the scale normalization of the latter. First, let $\lambda_r(\{r\})=1$. For $D\in \mathcal{D}_r$ such that $D\neq \{r\}$, $\lambda_r(D)$ is a positive constant to be determined later. Next for each $T\in \mathcal{D}_r$, let 
$$\lambda_r^T(T)=\sum_{D:\: r\in D\subseteq T}\lambda_r(D).$$ 
Similarly, for any $S\subseteq X$ such that $T\subseteq S\subseteq T\cup P_r$, where $P_r$ is the set of all alternatives that dominate $r$, the term $\lambda_r^T(S)$ represents
$$\lambda_r^T(S)=\sum_{D:\: r\in D\subseteq T}\lambda_r(D\cup (S\cap P_r)).$$ 
It turns out that once we have chosen $\lambda_r(D)$ in the appropriate way for $D\in \mathcal{D}_r$, for choice data to be consistent with the LRA model, $\lambda_r^T(S)$ must be given by the following equation:
\begin{equation}\label{equation: mobius}
\lambda_r^T(S)=\lambda_r^T(T)\sum_{D:\: T\subseteq D\subseteq S}(-1)^{|S\setminus D|}\frac{1}{p_r(r,D)}.
\end{equation}
Hence, we first choose $\lambda_r(D)$ for $D\in \mathcal{D}_r$, and then use Equation \ref{equation: mobius} to define $\lambda_r(\cdot)$ for all other choice sets. For this construction to work, we first need to show that if $\{p_r\}_{r\in X}$ satisfies \autoref{DORA}, then $\lambda_r^T(S)>0$ as long as $\lambda_r^T(T)>0$. 

\begin{lemma}
	\label{lemma: positivity}
	Assume that $\{p_r\}_{r\in X}$ is an RD-RAM. If $\{p_r\}_{r\in X}$ satisfies \nameref{DORA}, then for any $T\in \mathcal{D}_r$ and $S\subseteq X$ such that $T\subseteq S\subseteq T\cup P_r$, 
	$$\alpha_r^T(S)=\sum_{D:\: T\subseteq D\subseteq S}(-1)^{|S\setminus D|}\frac{1}{p_r(r,D)}>0.$$
\end{lemma}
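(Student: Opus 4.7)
The plan is to recognize that $\alpha_r^T(S)$ equals an iterated odds-difference $\Delta_{\mathcal{U}}\mathcal{O}^r_S$ for a suitable collection $\mathcal{U}$ of singleton sets, after which positivity follows directly from \nameref{DORA}.

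First, I would dispose of the trivial case $S = T$: since $r \in T$ and $p_r(r,T)>0$ by \nameref{NRE}, $\alpha_r^T(T) = 1/p_r(r,T) > 0$. From now on assume $S \supsetneq T$, and write $S \setminus T = \{u_1,\ldots,u_n\}$. Because $T \subseteq S \subseteq T \cup P_r$, each $u_i \in P_r$. Consider the collection of singletons $\mathcal{U} = \{\{u_1\},\ldots,\{u_n\}\}$. Each $\{u_i\} \subseteq P_r$ meets $S$, so \nameref{DORA} asserts $\Delta_{\mathcal{U}}\mathcal{O}^r_S > 0$.

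Next I would establish by induction on $n \geq 1$ the identity
\[\Delta_{\mathcal{U}}\mathcal{O}^r_S \;=\; \sum_{V \subseteq \{u_1,\ldots,u_n\}} (-1)^{|V|}\,\frac{1}{p_r(r,S \setminus V)}.\]
For $n = 1$, direct computation gives $\Delta_{\{\{u_1\}\}}\mathcal{O}^r_S = \mathcal{O}^r_S - \mathcal{O}^r_{S\setminus u_1} = 1/p_r(r,S) - 1/p_r(r,S\setminus u_1)$, because the additive constants $-1$ in $\mathcal{O}^r_D = 1/p_r(r,D) - 1$ cancel. For the inductive step, apply the recursion $\Delta_{\mathcal{U}}\mathcal{O}^r_S = \Delta_{\mathcal{U}\setminus\{u_1\}}\mathcal{O}^r_S - \Delta_{\mathcal{U}\setminus\{u_1\}}\mathcal{O}^r_{S\setminus u_1}$, expand both terms using the inductive hypothesis, and split the resulting sum over $V \subseteq \{u_1,\ldots,u_n\}$ according to whether $u_1 \in V$; the sign flip introduced by the minus sign converts $(-1)^{|V\setminus u_1|}$ into $(-1)^{|V|}$ in the ``$u_1 \in V$'' part.

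Finally, reindex by $D = S \setminus V$. As $V$ ranges over subsets of $\{u_1,\ldots,u_n\} = S \setminus T$, $D$ ranges over sets with $T \subseteq D \subseteq S$, and $|V| = |S \setminus D|$. Substituting yields $\Delta_{\mathcal{U}}\mathcal{O}^r_S = \alpha_r^T(S)$, and positivity follows from \nameref{DORA}. The only genuinely non-routine step is the short combinatorial induction establishing the $\Delta_{\mathcal{U}}$ identity; everything else is bookkeeping and reindexing.
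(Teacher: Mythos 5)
Your proof is correct and follows essentially the same route as the paper: both arguments reduce the claim to the identity $\alpha_r^T(S)=\Delta_{\mathcal{U}}\mathcal{O}^r_S$ for the collection of singletons drawn from $S\setminus T$ and then invoke \nameref{DORA}. The only difference is mechanical --- the paper derives this identity via M\"obius inversion while you prove the alternating-sum formula for the iterated difference directly by induction --- and your explicit treatment of the degenerate case $S=T$ (where the collection $\mathcal{U}$ is empty, \nameref{DORA} is vacuous, and positivity instead follows from \nameref{NRE}) is a small but welcome addition.
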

\begin{proof}
	Let $T\in \mathcal{D}_r$ be given, and define $\alpha_r^T(S)$ as above for any $S\subseteq X$ with $T\subseteq S\subseteq T\cup P_r$. We will invoke M\"obius Inversion to get an expression for $p_r(r,S)$. To this end, let $\Theta = P_r$. Then, $\alpha_r^T(T\cup \cdot )$ and $p_r(r,T\cup \cdot)$ are functions on $2^{\Theta}$ such that for any $A\subseteq \Theta$,
	$$\alpha_r^T(T\cup A) =  \sum_{B:\: B\subseteq A}(-1)^{|A\setminus B|}\frac{1}{p_r(r,T\cup B)}$$
	Hence, by M\"obius Inversion, 
	$$\frac{1}{p_r(r,T\cup A)}=\sum_{B:B\subseteq A} \alpha_r^T(T\cup B)$$
	or, alternatively,
	$$p_r(r,T\cup A)=\frac{1}{\sum_{B:B\subseteq A} \alpha_r^T(T\cup B)}.$$
	Now, letting $S=T\cup A$ and $D=T\cup B$ in the above equation, we get
	$$p_r(r,S)=\frac{1}{\sum_{D: T\subseteq D\subseteq S}\alpha_r^T(D)}.$$
	Now, note that 
	$$\mathcal{O}^r_S=\frac{1-p_r(r,S)}{p_r(r,S)}=\frac{1}{p_r(r,S)}-1 =\sum_{D: T\subseteq D\subseteq S}\alpha_r^T(D)-1.$$
	In addition, for any $U\subseteq P_r$ such that $U\cap S\neq \emptyset$, we have $U\cap T=\emptyset$. Therefore,
	\begin{align*}
	\Delta_{U}\mathcal{O}^r_S&=\frac{1-p_r(r,S)}{p_r(r,S)}-\frac{1-p_r(r,S\setminus U)}{p_r(r,S\setminus U)}=\sum_{D: T\subseteq D\subseteq S}\alpha_r^T(D)-\sum_{D: T\subseteq D\subseteq S\setminus U}\alpha_r^T(D)\\&=\sum_{D: T\subseteq D\subseteq S \text{ and } D\cap U\neq \emptyset}\alpha_r^T(D).
	\end{align*}
	Now, let $\mathcal{U}=\{\{x_i\}| \: x_i\in S\setminus T\}$. Note that $x_i\in P_r$ as $S\setminus T\subseteq P_r$. Then, using the above result, we get
	$$\Delta_{\mathcal{U}}\mathcal{O}^r_S=\sum_{D:\: T\subseteq D\subseteq S \text{ and for all } \{x_i\}\in \mathcal{U},\: D\cap \{x_i\}\neq \emptyset}\alpha_r^T(D)=\alpha_r^T(S).$$
	By \nameref{DORA}, this term is strictly positive.
\end{proof}

Since $\lambda_r^{\{r\}}(\{r\})=1>0$, \autoref{lemma: positivity} implies that for any $S$ such that $S\setminus P_r=\{r\}$, $\lambda_r^{\{r\}}(S)>0$. We let $\lambda_r(S)=\lambda_r^{\{r\}}(S)$ for any such $S$. Suppose we have defined $\lambda_r(S)>0$ for all $S$ with $|S\setminus P_r|<k$ where $k>1$. Let $T\in \mathcal{D}_r$ be such that $|T|=k$. For any $S\subseteq X$ such that $T\subseteq S\subseteq T\cup P_r$, we let $\lambda_r(S)$ satisfy 
$$\lambda_r(S)=\lambda_r^{T}(S)-\sum_{D:\: r\in D\subsetneq T}\lambda_r(D\cup (S\cap P_r)).$$
Note that by induction $\lambda_r(D\cup(S\cap P_r))$ is assumed to be defined for all $D\subsetneq T$ in the above equation. Similarly, $\lambda_r(D)$ is defined for all $D\subsetneq T$. Therefore, the only undefined term in $\lambda_r^T(T)  = \sum_{D:\: r\in D\subseteq T}\lambda_r(D)$ is $\lambda_r(T)$. Now, since $\lambda_r^T(S)$ is given by Equation \ref{equation: mobius}, by using \autoref{lemma: positivity}, we choose $\lambda_r(T)$ sufficiently large to ensure that 
\begin{align*}
\lambda_r^{T}(S)&=\lambda_r^{T}(T)\sum_{D:\: T\subseteq D\subseteq S}(-1)^{|S\setminus D|}\frac{1}{p_r(r,D)}\\
&=\Big(\sum_{D:\: r\in D\subseteq T}\lambda_r(D)\Big)\Big(\sum_{D:\: T\subseteq D\subseteq S}(-1)^{|S\setminus D|}\frac{1}{p_r(r,D)}\Big)\\
&> \sum_{D:\: r\in D\subsetneq S\setminus P_r}\lambda_r(D\cup (S\cap P_r)),
\end{align*}
for all $S$ with $T\subseteq S\subseteq T\cup P_r$. Hence, we have now defined $\lambda_r(S)>0$ for all $S$ with $|S\setminus P_r|=k$. We can follow this construction to define $\lambda_r(S)$ for all $S\subseteq X$. Lastly, note that for any $S$ with $S\setminus P_r = T$,
	\begin{equation}\label{equation: lemma5}
	\lambda_r^T(S) =\lambda_r(S)+\sum_{r\in D\subsetneq T}\lambda_r(D\cup (S\setminus T))=\sum_{D:\: (S\setminus T)\cup r\subseteq D\subseteq S}\lambda_r(D).
	\end{equation}

\begin{lemma}
	\label{lemma: luce-characterization}
	Assume that $\{p_r\}_{r\in X}$ is an RD-RAM. If $\{p_r\}_{r\in X}$ satisfies \nameref{RIDA} and \nameref{DORA}, then for any $(S,r)$ and $x\in S$, 
	$$p_r(x,S)=\sum_{D\subseteq S:\: x = \arg\max(\succeq, D)}\frac{\lambda_r(D)}{\sum_{D': \: r\in D'\subseteq S}\lambda_r(D')}.$$
\end{lemma}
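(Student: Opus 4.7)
The plan is to induct on $|S|$; the base case $|S|=1$ is immediate, since $S=\{r\}$ forces both sides to equal $1$. For the inductive step at a menu $S$, if $r$ is already $\succeq$-maximal in $S$ then $p_r(r,S)=1$, every $D$ with $r\in D\subseteq S$ has $r=\arg\max(\succeq,D)$, so the numerator coincides with the denominator; and for $x\prec r$ the numerator is empty, matching $p_r(x,S)=0$. The substantive case is when the dominant alternative $x^{*}=\arg\max(\succeq,S)$ differs from $r$; there I would establish the formula first for $x=r$, then for $x=x^{*}$, and finally for all other alternatives by RIDA.

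For $x=r$, set $T=S\setminus P_{r}$. The weights $\lambda_r$ were constructed so that Equation~\eqref{equation: mobius} holds, and M\"obius inversion, invoked exactly as in Lemma~\ref{lemma: positivity}, yields
\[p_r(r,S)=\frac{\lambda_r^{T}(T)}{\sum_{D':\,T\subseteq D'\subseteq S}\lambda_r^{T}(D')}.\]
Equation~\eqref{equation: lemma5} rewrites $\lambda_r^{T}(T)=\sum_{D:\,r\in D\subseteq T}\lambda_r(D)$, which equals the claimed numerator because $r=\arg\max(\succeq,D)$ for $D\ni r$ iff $D\subseteq T$. The same equation expands each $\lambda_r^{T}(D')$ as a sum of $\lambda_r(D)$ with $r\in D\subseteq D'$ and $D\cap P_r=D'\setminus T$; reindexing the resulting double sum via $D\mapsto D'=T\cup(D\cap P_r)$ collapses it to $\sum_{D:\,r\in D\subseteq S}\lambda_r(D)$, the claimed denominator.

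For $x=x^{*}$, I would appeal to RIDA. Under full support, $p_r(y,\cdot)$ is positive iff $y=r$ or $y\succ r$, so the zero-pattern of $p_r(\cdot,S\setminus x^{*})$ coincides with that of $p_r(\cdot,S)$ on $S\setminus x^{*}$. RIDA then produces a single constant $\kappa(S)$ with $p_r(y,S)=\kappa(S)\,p_r(y,S\setminus x^{*})$ for every $y\in S\setminus x^{*}$ chosen with positive probability. Summing over $y$ gives $\kappa(S)=1-p_r(x^{*},S)$, while specializing to $y=r$ and using the $x=r$ case at both $S$ and $S\setminus x^{*}$ — noting $(S\setminus x^{*})\setminus P_r=T$ because $x^{*}\in P_r$ — yields $\kappa(S)=D(S\setminus x^{*})/D(S)$, where I write $D(S):=\sum_{D':\,r\in D'\subseteq S}\lambda_r(D')$. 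Hence $p_r(x^{*},S)=[D(S)-D(S\setminus x^{*})]/D(S)$, whose numerator collapses to $\sum_{D:\,x^{*}=\arg\max(\succeq,D),\,r\in D\subseteq S}\lambda_r(D)$, since $x^{*}\in D\subseteq S$ forces $x^{*}=\arg\max(\succeq,D)$. For the remaining $x\in S\setminus\{r,x^{*}\}$ with $x\succ r$, the identity $p_r(x,S)=\kappa(S)\,p_r(x,S\setminus x^{*})$, the inductive hypothesis, and the fact that no $D$ with $x=\arg\max(\succeq,D)$ can contain $x^{*}$ close the induction; for $x\prec r$ both sides vanish.

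The main obstacle is the bookkeeping relating $\lambda_r^{T}$, which lives only on menus $D'$ with $T\subseteq D'\subseteq T\cup P_r$ through M\"obius inversion, back to the primitive weights $\lambda_r$ indexed over all subsets of $X$ containing $r$ that appear in the statement. Every identity above rests on rewriting sums of $\lambda_r^{T}(D')$ as sums of $\lambda_r(D)$ over the correct family, and a slip in the reindexing would spoil the telescoping. A secondary subtlety is that RIDA only furnishes ratios for pairs with strictly positive probability, so the observation that the zero-pattern of $p_r(\cdot,S\setminus x^{*})$ coincides with that of $p_r(\cdot,S)$ — a consequence of full support — is what allows a single $\kappa(S)$ to be extracted simultaneously for every non-dominant alternative.
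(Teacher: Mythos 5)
Your proposal is correct and follows essentially the same route as the paper's proof: Möbius inversion together with Equation~\eqref{equation: lemma5} to establish the formula for $x=r$, then induction on $|S|$ using \nameref{RIDA} with removal of the dominant alternative. The only cosmetic difference is that you recover $p_r(x^{*},S)$ directly from the two expressions for $\kappa(S)$, whereas the paper obtains it last from the probabilities summing to one; these are equivalent rearrangements of the same computation.
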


\begin{proof}
	Let $r\in X$ be given. First, we show that the representation holds for binary choice sets. If $p_r(x,\{r,x\})=0$, then $r\succ x$ and the representation holds trivially. If $p_r(x,\{r,x\})>0$, then by construction $\lambda_r(\{r\})=1$ and $\lambda_r(\{r,x\})=\frac{1}{p_r(r,\{r,x\})}-1$. Hence,
	$$\frac{\lambda_r(\{r,x\})}{\lambda_r(\{r\})+\lambda_r(\{r,x\})}=1-p_r(r,\{r,x\})=p_r(x,\{r,x\}).$$
	
	The next step is to show that the claim holds for all $S$ whenever $x=r$. Let $T=S\setminus P_r$. By applying M\"obius Inversion to \autoref{equation: mobius}, we get
	$$\frac{\lambda_r^T(T)}{p_r(r,S)}=\sum_{D:\: T\subseteq D\subseteq S}\lambda_r^T(D),$$
	and hence
	\begin{align*}
	p_r(r,S) &=\frac{\lambda_r^T(T)}{\sum_{D:\: T\subseteq D\subseteq S}\lambda_r^T(D)} = \frac{\sum_{D':\: r\in D'\subseteq T}\lambda_r(D')}{\sum_{D:\: T\subseteq D\subseteq S}\sum_{D'': \: (D\setminus T)\cup r\subseteq D''\subseteq D}\lambda_r(D'')}\\
	&=\frac{\sum_{D': r\in D'\subseteq T}\lambda_r(D')}{\sum_{D'': r\in D''\subseteq S}\lambda_r(D'')}
	\end{align*}
	where the second equality follows from \autoref{equation: lemma5}. Since $r$ is the $\succeq$-best alternative only in subsets of $T$, this proves the claim for $x=r$. 
	
	Now suppose the original claim holds for all choice sets with cardinality less than $k$, and let $S$ with $|S|=k$ be given. For any $x$ that is not chosen with positive probability, the claim is trivial. If there is only one alternative other than $r$, say $x$, that is chosen with positive probability, then the representation must hold, since $p_r(x,S)=1-p_r(r,S)$ and the representation holds for the reference alternative.  Hence, suppose there are at least two alternatives other than $r$ chosen with positive probability. Let $z$ denote the dominant alternative in $S$. If $x$ is chosen with positive probabilty, then \nameref{RIDA} implies that 
	$$\frac{p_r(x,S)}{p_r(r,S)}=\frac{p_r(x,S\setminus z)}{p_r(r,S\setminus z)}.$$
	Hence, by using the induction argument, the fact that $z$ is the dominant alternative in $S$, and that the representation holds whenever $x=r$, we get
	\begin{align*}p_r(x,S)&=\frac{p_r(x,S\setminus z)}{p_r(r,S\setminus z)}p_r(r,S)=\frac{\frac{\sum_{D\subseteq S\setminus z:\: x=\arg\max(\succeq, D)}\lambda_r(D)}{\sum_{D'\subseteq S\setminus z}\lambda_r(D')}}{\frac{\sum_{D\subseteq S\setminus z:\: r = \arg\max(\succeq ,D)}\lambda_r(D)}{\sum_{D'\subseteq S\setminus z}\lambda_r(D')}}\frac{\sum_{D\subseteq S:\: r=\arg\max(\succeq, D)}\lambda_r(D)}{\sum_{D'\subseteq S}\lambda_r(D')}\\
	&=\frac{\sum_{D\subseteq S:\: x=\arg\max(\succeq, D)}\lambda_r(D)}{\sum_{D\subseteq S:\: r = \arg\max(\succeq, D)}\lambda_r(D)}\frac{\sum_{D\subseteq S:\: r=\arg\max(\succeq, D)}\lambda_r(D)}{\sum_{D'\subseteq S}\lambda_r(D')}\\
	&=\sum_{D\subseteq S:\: x=\arg\max(\succeq, D)}\frac{\lambda_r(D)}{\sum_{D'\subseteq S}\lambda_r(D')},
	\end{align*}
	as desired. Hence, we have shown that the representation must hold for all $x\in S\setminus z$. Lastly, the claim holds for $z$ as well, since $p_r(z,S)=1-\sum_{x\in S\setminus z}p_r(x,S)$. 
\end{proof}

To conclude the proof of the theorem, for any $r\in X$, we define a probability distribution $\pi_r$ given by
\begin{equation}
\label{equation: luce_weights}
\pi_r(S)=\frac{\lambda_r(S)}{\sum_{D:\: r\in D\subseteq X} \lambda_r(D)}.
\end{equation}
\autoref{lemma: luce-characterization} guarantees that $(\succeq,\{\pi_r\}_{r\in X})$ represents $\{p_r\}_{r\in X}$. 

\subsection{Proof of \autoref{theorem:CRA}}

\subsubsection{Necessity}

The necessity of \autoref{IDA} is obvious from the main text. Here we show the necessity of \autoref{DPCRA}.

\begin{lemma}
	\label{lemma:CRA-necessity}
	If $\{p_r\}_{r\in X}$ has a Constant Random Attention representation $(\succeq, \{\pi_r\}_{r\in X})$, then it satisfies \nameref{DPCRA}.
\end{lemma}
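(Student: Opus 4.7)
The plan is to unfold the CRA formula to express $p_r(r, \cdot)$ directly in terms of the latent consideration-set distribution $\pi_r$, then to compute the iterated difference $\Delta_{\mathcal{U}} p_r(r, S)$ as the $\pi_r$-mass of an explicit subfamily of latent consideration sets, and finally to deduce strict positivity from the full-support assumption on $\pi_r$.

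First, set $V = S \cap P_r$. Since $r$ is chosen only from consideration sets disjoint from $P_r$ and the realized consideration set is $D' \cap T$, the CRA formula rewrites as
\[p_r(r, T) = \pi_r\bigl(\{D' : r \in D',\ D' \cap (T \cap P_r) = \emptyset\}\bigr)\]
for any $T \in \mathcal{X}_r$. For $U \subseteq P_r$ with $S \cap U \neq \emptyset$, taking the set difference between the families corresponding to $T = S \setminus U$ and $T = S$ yields the base case
\[\Delta_U p_r(r, S) = \pi_r\bigl(\{D' : r \in D',\ \emptyset \neq D' \cap V \subseteq U \cap V\}\bigr) > 0,\]
where positivity comes from applying full support to $D' = \{r, w\}$ for any $w \in U \cap V$.

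The next step is to induct on $n = |\mathcal{U}|$ to establish
\[\Delta_{\mathcal{U}} p_r(r, S) = \pi_r\!\left(\left\{D' : r \in D',\ D' \cap V \subseteq \bigcup_{i=1}^n W_i,\ D' \cap W_i \neq \emptyset\ \forall i\right\}\right),\]
where $W_i = U_i \cap V$. The inductive step unfolds the recursive definition $\Delta_{\mathcal{U}} p_r(r, S) = \Delta_{\mathcal{U} \setminus U_1} p_r(r, S \setminus U_1) - \Delta_{\mathcal{U} \setminus U_1} p_r(r, S)$, applies the inductive hypothesis to both terms (the first with $V$ replaced by $V \setminus W_1$), and observes that the resulting two $\pi_r$-masses differ exactly in whether $D' \cap W_1 = \emptyset$, producing the extra constraint $D' \cap W_1 \neq \emptyset$. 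Positivity is then immediate: by full support, $D' = \{r\} \cup \{w_1, \ldots, w_n\}$ with any $w_i \in W_i$ lies in the displayed set and has $\pi_r(D') > 0$.

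The main obstacle will be the set-theoretic bookkeeping inside the inductive step, since the $W_i$ may overlap and the sign of the telescoped difference is sensitive to which $D'$ cancel. The cleanest route is to first verify by a one-shot induction from the recursion that
\[\Delta_{\mathcal{U}} p_r(r, S) = \sum_{I \subseteq [n]} (-1)^{n - |I|}\, p_r\bigl(r,\ S \setminus \bigcup_{i \in I} U_i\bigr),\]
and then exchange the order of summation: for each latent $D'$, the inner sum $\sum_{I \subseteq [n]} (-1)^{n-|I|}\, \mathbbm{1}[D' \cap V \subseteq \bigcup_{i \in I} W_i]$ is a standard inclusion-exclusion count that collapses to the indicator of the set above, delivering both the identity and its positivity in a single calculation.
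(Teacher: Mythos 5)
Your argument follows the same route as the paper's: rewrite $p_r(r,T)$ as the $\pi_r$-mass of the consideration sets that meet no element of $T\cap P_r$, identify $\Delta_{\mathcal{U}}p_r(r,S)$ with the mass of the family of sets that lie inside $\bigcup_i U_i$ on $S\cap P_r$ and meet every $S\cap U_i$, and then exhibit a single full-support witness. Your base case and your witness $\{r\}\cup\{w_1,\dots,w_n\}$ are both correct, and your detour through the closed form
\[
\Delta_{\mathcal{U}} p_r(r, S) = \sum_{I \subseteq \{1,\dots,n\}} (-1)^{n - |I|}\, p_r\bigl(r,\ S \setminus \textstyle\bigcup_{i \in I} U_i\bigr)
\]
is a more explicit derivation of an identity that the paper simply asserts in one display, so on that point you add rigor rather than lose it.

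There is, however, a gap at exactly the step you flagged as the main obstacle. The alternating sum $\sum_{I}(-1)^{n-|I|}\mathbbm{1}\bigl[D'\cap V\subseteq\bigcup_{i\in I}W_i\bigr]$ collapses to the indicator of $\{D'\cap V\subseteq\bigcup_i W_i \text{ and } D'\cap W_i\neq\emptyset\ \forall i\}$ only when the sets $W_i=U_i\cap S$ are pairwise disjoint: in that case, writing $K=\{i: D'\cap W_i\neq\emptyset\}$, the condition $D'\cap V\subseteq\bigcup_{i\in I}W_i$ becomes $K\subseteq I$ (given $D'\cap V\subseteq\bigcup_i W_i$) and the sum telescopes to $\mathbbm{1}[K=\{1,\dots,n\}]$. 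With overlapping $W_i$ the count is not an indicator — for $n=2$, a set $D'$ with $\emptyset\neq D'\cap V\subseteq W_1\cap W_2$ receives coefficient $1-1-1+0=-1$ — and in fact \nameref{DPCRA} as literally stated fails for CRA when, say, $U_1\subsetneq U_2$, since then $\Delta_{\mathcal{U}}p_r(r,S)=-\Delta_{U_1}p_r(r,S)<0$. So "standard inclusion-exclusion" does not deliver the claimed collapse without a disjointness hypothesis. To be clear, this is not a defect of your argument relative to the paper: the paper's own displayed formula for $\Delta_{\mathcal{U}}p_r(r,S)$ carries the identical implicit assumption, and the sufficiency proof of the CRA theorem only ever invokes the axiom with pairwise disjoint singletons. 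Read with $U_i\cap S$ pairwise disjoint, your proof is complete and coincides with the paper's.
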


\begin{proof}
Let $D_r$ be the largest set in which $r$ is the dominant alternative, i.e., $D_r=X\setminus P_r$. First, notice that for any $(S,r)$ and any $U\subseteq P_r$ such that $S\cap U\neq \emptyset$, 
\begin{align*}
\Delta_{U}p_r(r,S)&=p_r(r,S\setminus U)-p_r(r,S)=\sum_{D:\: r\in  D\subseteq (X\setminus (S\setminus U))\cup D_r}\pi_r(D)-\sum_{D:\: r\in  D\subseteq (X\setminus S)\cup D_r}\pi_r(D)\\
&=\sum_{D:\: r\in  D\subseteq (X\setminus (S\setminus U))\cup D_r \text{ and } D\cap (S\cap U)\neq \emptyset}\pi_r(D)\geq \pi_r((S\cap U)\cup r)>0.
\end{align*}
Similarly, for any collection $\mathcal{U}=\{U_1,\dots, U_n\}$ such that $U_i\subseteq P_r$ and $S\cap U_i\neq \emptyset$ for every $i\in \{1,\dots, n\}$,
\begin{align*}
\Delta_{\mathcal{U}}p_r(r,S)=&\sum_{D:\: r\in  D\subseteq [X\setminus (S\setminus (\bigcup_{U_i\in \mathcal{U}}U_i))]\cup D_r \text{ and for all }i, \: D\cap (S\cap U_i)\neq \emptyset}\pi_r(D)\\
&\geq \pi_r((S\cap (\cup_{U_i\in \mathcal{U}}U_i))\cup r)>0.
\end{align*}
\end{proof}

\subsubsection{Sufficiency}

Now, we show that \autoref{IDA} and \autoref{DPCRA} are sufficient for a CRA representation. As before, we let $x\succeq y$ if  $p_y(x,\{x,y\})>0$, which is a linear order since $\{p_r\}_{r\in X}$ is an RD-RAM. Let $r\in X$ be given. Let $D_r$ be the largest set in which $r$ is the dominant alternative, i.e., $D_r=X\setminus P_r$. For any $S\supseteq D_r$, let 
\begin{equation}
\label{equation: CRA-lambdas}
\lambda_r(S)=\sum_{D:\: D_r\subseteq D\subseteq S}(-1)^{|S\setminus D|}p_r(r,(X\setminus D)\cup D_r).
\end{equation}
The next lemma shows that if $\{p_r\}_{r\in X}$ satisfies \autoref{DPCRA}, then $\lambda_r(S)>0$ for any choice set $S\subseteq X$ such that $S\supseteq D_r$.  
\begin{lemma}
\label{lemma: CRA-lambdas}
Assume that $\{p_r\}_{r\in X}$ is an RD-RAM and it satisfies \autoref{DPCRA}. Then $\lambda_r(S)$ defined by \autoref{equation: CRA-lambdas} is strictly positive for any choice set $S\subseteq X$ such that $S\supseteq D_r$.  
\end{lemma}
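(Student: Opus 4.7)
The plan is to recognize $\lambda_r(S)$ as a specific iterated difference of $p_r(r,\cdot)$ of exactly the kind that \nameref{DPCRA} forces to be positive. The key observation is that, although the formula for $\lambda_r(S)$ looks like an awkward signed sum over subsets $D$ with $D_r \subseteq D \subseteq S$, a change of index turns it into a clean alternating sum of values of $p_r(r,\cdot)$ on nested menus.

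First, I would reparametrize. Every $D$ with $D_r \subseteq D \subseteq S$ can be written uniquely as $D = D_r \cup B$ for some $B \subseteq A$, where $A := S \setminus D_r \subseteq P_r$. Using that $D_r$ and $P_r$ partition $X$, we get $X \setminus D = P_r \setminus B$, hence $(X \setminus D) \cup D_r = X \setminus B$; moreover $|S \setminus D| = |A \setminus B|$. Therefore
$$\lambda_r(S) = \sum_{B \subseteq A}(-1)^{|A \setminus B|}\, p_r(r,\, X \setminus B).$$
Next, I would verify by induction on $|A|$ that this alternating sum equals $\Delta_{\mathcal{U}_A} p_r(r, X)$, where $\mathcal{U}_A = \{\{x\} : x \in A\}$ is the family of singletons drawn from $A$. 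The base case $|A|=0$ gives $p_r(r,X)>0$ by \nameref{NRE}; the inductive step follows immediately from the recursive definition of $\Delta_{\mathcal{U}} p_r(r,\cdot)$, splitting the sum over $B$ according to whether a chosen element $x \in A$ belongs to $B$, which mirrors peeling off the singleton $\{x\}$ from $\mathcal{U}_A$.

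Finally, I would check that $\mathcal{U}_A$ is an admissible collection for \nameref{DPCRA} at $(X,r)$: each $\{x\} \subseteq P_r$ because $A \subseteq P_r$, and $X \cap \{x\} = \{x\} \neq \emptyset$; note also $r \notin P_r$, so each $U_i$ legitimately avoids $r$. Hence $\Delta_{\mathcal{U}_A} p_r(r, X) > 0$, yielding $\lambda_r(S) > 0$. I do not anticipate a serious obstacle here; the argument is essentially bookkeeping. The only subtle point is the set-theoretic identity $(X\setminus D)\cup D_r = X \setminus (D \setminus D_r)$ for $D \supseteq D_r$, which is what converts the opaque definition of $\lambda_r(S)$ into an alternating sum matching the DPCRA operator.
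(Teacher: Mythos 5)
Your proof is correct and follows essentially the same route as the paper's: both arguments identify $\lambda_r(S)$ with the iterated difference $\Delta_{\mathcal{U}}p_r(r,X)$ for $\mathcal{U}=\{\{x\}:x\in S\cap P_r\}$ and then invoke \nameref{DPCRA} (with \nameref{NRE} handling the degenerate case $S=D_r$, where the collection is empty). The only difference is bookkeeping: the paper reaches this identity by M\"obius-inverting the defining formula and showing the iterated difference of the resulting sum collapses to $\lambda_r(S)$, whereas you expand the iterated difference directly into the alternating sum.
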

\begin{proof}
Let $r\in X$ be given and for any $S\supseteq D_r$, define $\lambda_r(S)$ as above. Given the definition of $\lambda_r$, M\"obius Inversion implies that for any $S'\supseteq D_r$,
$$p_r(r,(X\setminus S')\cup D_r)=\sum_{D:\: D_r\subseteq D\subseteq S'}\lambda_r(D).$$
Letting $S=(X\setminus S')\cup D_r$ and noting that $S' = (X\setminus S)\cup D_r$, we get
\begin{equation}
\label{equation: CRA-reference_prob}
p_r(r,S)=\sum_{D:\: D_r\subseteq D\subseteq (X\setminus S)\cup D_r}\lambda_r(D).
\end{equation}
This shows that for any $S\supseteq D_r$, $p_r(r,S)$ is given by \autoref{equation: CRA-reference_prob}. Now let $S\supseteq D_r$ be given, and let $\mathcal{U}=\{\{x_i\}| \: x_i\in S\cap P_r\}$. By \autoref{DPCRA}, $\Delta_{\mathcal{U}}p_r(r,X)>0$. Notice that 
\begin{align*}
0<\Delta_{\mathcal{U}}p_r(r,X)=\sum_{D:\: D_r\subseteq  D\subseteq [X\setminus (X\setminus (\bigcup_{U_i\in \mathcal{U}}U_i))]\cup D_r \text{ and for all } i,\: D\cap (X\cap U_i)\neq \emptyset}\lambda_r(D)=\lambda_r(S)
\end{align*}
as desired.
\end{proof}

By \autoref{equation: CRA-reference_prob},
$$p_r(r,D_r)=\sum_{D:\: D_r\subseteq D\subseteq X}\lambda_r(D).$$
Since, by definition, $p_r(r,D_r)=1$, this together with \autoref{lemma: CRA-lambdas} imply that $\lambda_r$ is a probability distribution on sets $S\supseteq D_r$. Now choose any $\pi_r$ over $\mathcal{X}$ such that $\pi_r(D)>0$ for each $D\in \mathcal{X}$ and for any $S\supseteq D_r$,
\begin{equation}
\label{equation: CRA-reference_prob2}
\lambda_r(S)=\sum_{D:\: (S\setminus D_r)\cup r \subseteq D\subseteq S}\pi_r(D).
\end{equation}

\begin{lemma}
Suppose $\{p_r\}_{r\in X}$ is an RD-RAM and it satisfies \nameref{IDA} and \nameref{DPCRA}. Then for any $(S,r)$ and $x\in S$, 
$$p_r(x,S)=\sum_{D:\: x = \arg\max(\succeq, D\cap S)}\pi_r(D).$$
\end{lemma}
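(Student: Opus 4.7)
My plan is to verify the CRA representation in two stages: first establish the identity for the reference alternative $x = r$ directly from the constructions of $\lambda_r$ and $\pi_r$, and then extend to arbitrary $x \in S$ by induction on $|S|$, using \nameref{IDA} to reduce menus.

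For the first stage, denote the right-hand side by $\hat p_r(x,S) := \sum_{D:\, x = \argmax(\succeq, D\cap S)} \pi_r(D)$. The condition $r = \argmax(\succeq, D\cap S)$ is equivalent to $r \in D$ together with $D\cap S\cap P_r = \emptyset$. Decomposing each such $D$ as $D' \cup W$ with $r \in D' \subseteq D_r$ and $W \subseteq P_r \setminus S$, and summing over $D'$ using \autoref{equation: CRA-reference_prob2}, gives $\hat p_r(r,S) = \sum_{W \subseteq P_r\setminus S} \lambda_r(D_r \cup W)$. Reparameterizing $D = D_r \cup W$ in \autoref{equation: CRA-reference_prob} yields exactly the same expression for $p_r(r,S)$, so the identity holds at $x = r$.

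For the second stage, induct on $|S|$. The base case $S = \{r\}$ is immediate. For larger $S$ and $x \in S$: if $x = r$, apply stage one; if $r \succ x$, both sides vanish (the left by the RD-RAM, the right because $r \in D\cap S$ dominates $x$ for every $D$ with $\pi_r(D) > 0$). When $x \succ r$, seek $y \in S\setminus\{r,x\}$ with $x \succ y$. If such $y$ exists, \nameref{IDA} gives $p_r(x,S) = p_r(x,S\setminus y)$, and inserting or removing an element strictly $\succ$-below $x$ cannot change whether $x$ is the $\succeq$-maximum of $D\cap S$, so $\hat p_r(x,S) = \hat p_r(x,S\setminus y)$ as well; the induction hypothesis closes this subcase.

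The main obstacle is the remaining subcase: $x \succ r$ yet $x$ is the $\succeq$-minimum of $S\setminus\{r\}$, so \nameref{IDA} offers no reduction applied directly to $x$. My workaround is the identity $\sum_{z \in S} p_r(z,S) = 1 = \sum_{z \in S} \hat p_r(z,S)$. For every $z \in S\setminus\{x\}$ the representation holds: $z = r$ is covered by stage one, and for $z \in S\setminus\{r,x\}$ the relation $z \succ x$ (since $x$ is the $\succeq$-minimum of $S\setminus\{r\}$) allows \nameref{IDA} with $x$ itself as the removable dominated alternative to give $p_r(z,S) = p_r(z,S\setminus x)$, which equals $\hat p_r(z,S\setminus x) = \hat p_r(z,S)$ by induction. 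Subtracting from $1$ then delivers $p_r(x,S) = \hat p_r(x,S)$, completing the induction.
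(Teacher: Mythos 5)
Your proof is correct and follows essentially the same route as the paper's: first derive the formula for the reference alternative from the M\"obius-inversion identities defining $\lambda_r$ and $\pi_r$, then induct on $|S|$, using \nameref{IDA} to delete an alternative strictly $\succeq$-below $x$, and recover the one alternative admitting no such reduction (the $\succeq$-minimum of $S\setminus\{r\}$ when it dominates $r$) from the fact that both sides sum to one over $S$; the paper organizes the induction by removing a single distinguished worst-chosen alternative $z$ rather than a per-$x$ choice of $y$, but this is the same idea. The one spot to patch is your first stage: \autoref{equation: CRA-reference_prob} is established only for $S\supseteq D_r$, so for $S\not\supseteq D_r$ you need one further application of \nameref{IDA} to pass from $p_r(r,S)$ to $p_r(r,S\cup D_r)$ before the reparameterization $D=D_r\cup W$ is licensed --- exactly the step the paper takes --- while your expression for $\hat p_r(r,S)$ already handles this automatically since $P_r\setminus S=P_r\setminus(S\cup D_r)$.
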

\begin{proof}
First, notice that by \autoref{equation: CRA-reference_prob} and \autoref{equation: CRA-reference_prob2} for any $S\supseteq D_r$, 
$$p_r(r,S)=\sum_{D:\: D_r\subseteq D\subseteq (X\setminus S)\cup D_r}\lambda_r(D)=\sum_{D:\: r \in D\subseteq (X\setminus S)\cup D_r}\pi_r(D)=\sum_{D:\: r\in D\cap S\subseteq D_r}\pi_r(D).$$
On the other hand, if $S\not \supseteq D_r$, then Irrelevance of Dominated Alternatives implies that $$p_r(r,S)=p_r(r,S\cup D_r)=\sum_{D:\: r\in D\cap (S\cup D_r)\subseteq D_r}\pi_r(D)=\sum_{D:\: r\in D\cap S\subseteq D_r}\pi_r(D).$$ 
Tihs shows that the representation holds for the reference alternative. This also shows that the representation holds for binary choice sets $\{x,r\}$, since it holds for $p_r(r,\{x,r\})$ and $p_r(x,\{x,r\})=1-p_r(r,\{x,r\})$. Suppose the representation holds for all choice sets with cardinality less than $k$. Let $S$ with $|S|=k$ be given. If $S$ contains only one alternative, say $x$, that dominates $r$, then since $|S|\geq 3$, there exists $z\in S$ such that $p_r(z,S)=0$. \nameref{IDA} implies that $p_r(x,S)=p_r(x,S\setminus z)$ and $p_r(r,S)=p_r(r,S\setminus z)$. Since by induction the representation holds for $S\setminus z$, it also holds for $S$. Now suppose there are at least two alternatives which dominate $r$. Let $z\in S$ be an alternative that dominates $r$ but is dominated by all other alternatives $x$ for which $p_r(x,S)>0$. Let $x\notin \{r,z\}$ with $p_r(x,S)>0$ be given. By \nameref{IDA}, $p_r(x,S)=p_r(x,S\setminus z).$
By induction argument,
$$p_r(x,S\setminus z)=\sum_{D:\: x = \arg\max(\succeq, D\cap (S\setminus z))}\pi_r(D).$$
But since $x\succ z$, $x$ is $\succeq$-best in $D\cap S$ if and only if it is $\succeq$-best in $D\cap (S\setminus z)$. Hence, 
$$p_r(x,S)=p_r(x,S\setminus z)=\sum_{D:\: x = \arg\max(\succeq, D\cap S)}\pi_r(D).$$
Since the representation holds for the reference alternative and all alternatives that dominate $z$, to conclude the proof, we only need to show that the representation also holds for $z$. This follows from the fact $p_r(z,S)=1-\sum_{x \in S\setminus z}p_r(x,S)$.
\end{proof}

\subsection{Proof of \autoref{proposition: SQM - CRA}}

\begin{proof}
	Let $(\succeq , \pi)$ denote the reference-independent CRA representation of $\{p_r\}_{r \in X}$. If $r\succ x$, then $p_r(x,S) = 0 <p_x(x,S)$, and hence \nameref{SQM} is satisfied. If $x\succ r$, notice that
	\begin{align*}
	p_r(x,S) &= \sum_{D\ni r:\: x = \arg\max(\succeq, D\cap S)} [\pi(D) + \pi(D\setminus r)]\\
	&= \sum_{D :\: x = \arg\max(\succeq, D\cap S)} \pi(D) \\
	& < \sum_{D :\: x = \arg\max(\succeq, D\cap S)} [\pi(D) +\pi(D\setminus x)]\\
	& = p_x(x,S),
	\end{align*}
	and (strict) \nameref{SQM} is again satisfied. 
\end{proof}

\subsection{Proof of \autoref{theorem: unobservable}}

\begin{proof} Let $p$ denote any stochastic choice rule. We construct an RD-RAM representation (without full support) with random reference. Let $\succeq$ be any linear order. Fix $S$ and for each $r \in S$, we construct consideration rule that always leads to the choice of $r$. That is, if $D \cap U_\succeq(r) =\{r\}$, let $\mu_r(D,S) \in (0,1)$ be arbitrary and let $\mu_r(D,S) =0$ otherwise. Finally, for every $S$ and each $x \in S$, let $\eta(x,S)=p(x,S)$. Then for every $r$,  $p_r(r, S)=1$. Hence, for any $x \in S$, 
\[p_{\eta}(x,S)=\sum_{r \in S}p_r(x, S)\eta(r, S)=\eta(x,S)=p(x,S).\]
\end{proof}

\pagebreak

\bibliographystyle{abbrvnat}
\bibliography{rdra.bib}

\end{document}